\newcommand\independent{\protect\mathpalette{\protect\independent}{\perp}} 
\def\independent#1#2{\mathrel{\rlap{$#1#2$}\mkern2mu{#1#2}}}
\newcommand{\supp}{\mathrm{Supp}}
\newcommand{\F}{\mathbb{F}}  
\newcommand{\mR}{\mathbb{R}} 
\newcommand{\mZ}{\mathbb{Z}}
\newcommand{\pp}{\mathbb{P}}
\newcommand{\E}{\mathbb{E}}
\newcommand{\e}{\varepsilon}
\newcommand{\rank}{\mathrm{rank}}
\newcommand{\A}{\mathcal{A}}
\newcommand{\Y}{\mathcal{Y}}
\newcommand{\jj}{[j]}
\newcommand{\ii}{[i]}
\newcommand{\J}{[J]}
\newcommand{\C}{[J^c]}
\newcommand{\K}{[J\setminus i]}
\newcommand{\m}{[m]}
\newcommand{\1}{[1]}
\newcommand{\M}{E_m}
\newcommand{\bX}{\bar{X}}
\newcommand{\bU}{\bar{U}}
\newcommand{\mat}{\mathrm{MAT}}
\theoremstyle{definition}
\newtheorem{definition}{Definition}
\theoremstyle{plain}
\newtheorem{thm}{Theorem}
\theoremstyle{plain}
\theoremstyle{plain}
\newtheorem{lemma}{Lemma}
\theoremstyle{plain}
\newtheorem{corol}{Corollary}
\theoremstyle{plain}
\theoremstyle{remark}
\theoremstyle{plain}
\begin{document}
\title{Polar Codes for the $m$-User MAC}
\author{%
\IEEEauthorblockN{Emmanuel Abbe, Emre Telatar}
\IEEEauthorblockA{Information Processing Group, EPFL\\
Lausanne 1015, Switzerland \\
Email: \{emmanuel.abbe,emre.telatar\}@epfl.ch\\
              }
}

\maketitle


\begin{abstract}
In this paper, polar codes for the $m$-user multiple access channel (MAC) with binary inputs are constructed.  
It is shown that Ar{\i}kan's polarization technique applied individually to each user transforms independent uses of a $m$-user binary input MAC into successive uses of extremal MACs. This transformation has a number of desirable properties: (i) the `uniform sum rate'\footnote{In this paper all mutual informations are computed when the inputs of a MAC are distributed independently and uniformly.   The resulting rate regions, sum rates, etc., are prefixed by `uniform' to distinguish them from the capacity region, sum capacity, etc.} of the original MAC is preserved, (ii) the extremal MACs have uniform rate regions that are not only polymatroids but matroids and thus (iii) their uniform sum rate can be reached by each user transmitting either uncoded or fixed bits; in this sense they are easy to communicate over.  
A polar code can then be constructed with an encoding and decoding complexity of $O(n \log n)$ (where $n$ is the block length), a block error probability of $o(\exp(- n^{1/2 - \e}))$, and capable of achieving the uniform sum rate of any binary input MAC with arbitrary many users. An application of this polar code construction to communicating on the AWGN channel is also discussed. 
\end{abstract}

\section{Introduction}
The polarization technique, introduced by Ar{\i}kan in \cite{ari}, transforms $n$ independent uses of a noisy binary input channel into single-uses of $n$ synthetic binary input channels. The key property of this transformation is that almost all of these synthetic channels are polarized, in the sense that they are either very noisy or almost noiseless (i.e., having a mutual information either close to 0 or to 1). 
Moreover, this technique preserves the `uniform mutual information' --- the mutual information of the channel with the uniform input distribution --- that is, the proportion of synthesized channels that are almost noiseless tends to the uniform mutual information. 
As the very noisy or almost noiseless channels are channels for which it is easy to code, 
this transformation leads to the following coding scheme:
uncoded information bits are sent on the polarized channels that have uniform mutual informations close to 1, 
and on the remaining channels, bits frozen to pre-determined values are transmitted.


In addition to bringing a new perspective on coding, polar codes can be implemented with low computational effort. More precisely, the encoding and decoding complexity of a polar code is $O(n \log n)$. 
By definition of the uniform mutual information, these codes achieve the capacity of any channel whose capacity achieving input distribution is uniform. 
The original polar code construction was generalized in \cite{qpol} for channels with binary input alphabets to channels with alphabets of arbitrary prime cardinality, allowing polar codes to approach the capacity of any discrete memoryless channel.

In this paper, we show how the polarization technique can be extended to a multi-user problem, namely, the multiple access channel (MAC). One of the interesting aspect of this extension is that, as opposed to the single-user setting where a single mutual information characterizes an achievable rate, there is in a MAC setting a collection of mutual informations that characterize an achievable rate {\it region}. Hence, the terminology ``polarized'' may need to be revised in a MAC setting, as there may be more than two ``polarized MACs". 
Indeed, for a 2-user binary input MAC, by applying Arikan's construction to each user's input separately, \cite{2mac} shows that $n$ independent uses of a 2-user MAC are converted into $n$ successive uses of five possible ``extremal 2-user MACs". These 2-user extremal MACs are: (i) each user sees a pure noise channel, (ii) one of the user sees a pure noisy channel and the other sees a noiseless channel, (iii) both users see a noiseless channel, (iv) a pure contention channel: a channel whose uniform rate region is the triangle with vertices (0,0), (0,1), (1,0). Note that for this channel, if any of the two users communicates at zero rate, the other user sees a noiseless channel. Moreover \cite{2mac} shows that the uniform sum rate of the original MAC is preserved during the polarization process, and that the polarization to the extremal MACs occurs fast enough, so as to ensure the construction of a polar code with vanishing block error probability, achieving uniform sum rate on binary inputs 2-user MACs.  


In contrast to \cite{2mac}, here we investigate the polarization of the MAC for an arbitrary number of users. 
In the two user case, the extremal MACs are not just MACs for which each users sees either a noiseless or pure noise channel, as there is also the pure contention channel. However, the uniform rate region of the 2-user extremal MACs are all polyhedrons with integer valued constraints. So, the first interesting aspect of the polarization of the MAC with arbitrary many users is to understand what pattern do extremal MACs follow. We will see that the 2-user and 3-user cases can be handled in a similar manner, whereas a new phenomenon appears when the number of users reaches 4, and the extremal MACs are no longer in a one to one correspondence with the polyhedrons having integer valued constraints.
To characterize the extremal MACs, we first show how a relationship between random variables defined in terms of mutual information falls precisely within the independence notion of the matroid theory. This connection is used to show that the extremal MACs are in a one to one correspondence with binary matroids, and are ``equivalent'' (in a sense which will be defined later) to linear deterministic MACs. This is then used to conclude the construction of a polar code ensuring reliable communication on binary input MACs for arbitrary values of $m$.


Finally, we discuss two applications resulting from the MAC polar code construction with arbitrary many users described in this paper. The first one is motivated by the idea of proposing a new coding scheme for the additive white Gaussian noise channel. By transmitting the standardized average of $m$ binary inputs which are uniformly distributed (taking into account the power constraint), we transmit a random input which is approximately Gaussian distributed when $m$ is large (using the central limit theorem). This is important to achieve the highest rate on the AWGN channel, since the Gaussian input distribution maximizes the mutual information for this channel. We can then use the polar code construction for a MAC developed in this paper to propose a new coding scheme for the AWGN channel.
In the second application, we construct polar codes achieving the uniform sum-rate of MACs with $q$-ary inputs, where $q$ is a power of 2, using the polar code construction for MACs with binary inputs and a large enough number of users. We also show how, with this extension, the sum-capacity of any $m$-user MAC can be achieved. 
%

\section{Polarization Process for MACs}\label{cons}

We consider a $m$-user multiple access channel with binary input alphabets (BMAC) and arbitrary output alphabet $\Y$. The channel is specified by the conditional probabilities
$$
P(y|\bar{x}),\quad\text{for all $y \in \Y$ and $\bar{x}=(x\1,\ldots,x\m) \in \F_2^m$}.
$$

Let $\M := \{1,\ldots,m\}$ and let $X\1,\ldots,X\m$ be mutually independent and uniformly distributed binary random variables. Let $\bX:=(X\1,\ldots,X\m)$. We denote by $Y$ the output of the MAC $P$ when the input is $\bX$. 
For $J \subseteq \M$, we define
\begin{align*}
 X\J & := \{ X\ii  : i \in J \}, \\
 I\J(P) & := I(X\J;Y X\C),
\end{align*}
where $J^c$ denotes the complement set of $J$ in $\M$, and
\begin{align}
 I(P)  :\, 2^{E_m} & \rightarrow \mR \notag \\
 J \,\,& \mapsto I\J(P) \label{Ifct}
\end{align}
where $2^{E_m}$ denotes the power set of $\M$ and where $I[\emptyset](P) = 0$.
Note that 
\begin{align*}
& \mathcal{I}(P) := 
 \{(R_1,\ldots,R_m): \, 0 \leq \sum_{i \in J} R_i \leq I\J(P), \, \forall J \subseteq \M \}
\end{align*}
is included in the capacity region of the MAC $P$. We refer to $\mathcal{I}(P)$ as the uniform rate region and to $I[\M](P)$ as the uniform sum rate. 
We now consider two independent uses of such a MAC. We define
\begin{align*}
& \bX_1:= (X_1\1,\ldots, X_1\m), \quad  \bX_2 := (X_2\1,\ldots, X_2\m),  
\end{align*}
where $X_1\ii,X_2\ii$, with $i \in \M$, are mutually independent and uniformly distributed binary random variables. We denote by $Y_1$ and $Y_2$ the respective outputs of independent uses of the MAC $P$ when the inputs are $\bX_1$ and $\bX_2$:
\begin{align}
& \bX_1 \stackrel{P}{\rightarrow} Y_1,  \quad
\bX_2 \stackrel{P}{\rightarrow} Y_2. \label{p1}
\end{align} 
We define two additional binary random vectors 
\begin{align*}
& \bU_1 :=(U_1\1, \ldots ,U_1\m) , \quad \bU_2:=(U_2\1,\ldots,U_2\m)
\end{align*} 
with mutually independent and uniformly distributed components, and we put $\bX_1$ and $\bX_2$ in the following one to one correspondence with $\bU_1$ and $\bU_2$:
\begin{align*}
\bX_1 = \bU_1 + \bU_2,\qquad
\bX_2 =  \bU_2,  
\end{align*}
where the addition in the above is the modulo 2 component wise addition. 

\begin{definition}
Let $P: \F_2^m \rightarrow \Y$ be a $m$-user BMAC.
We define two new $m$-user BMACs, $P^-: \F_2^m \rightarrow \Y^2$ and $P^+: \F_2^m \rightarrow  \Y^2 \times \F_2^m$, by
\begin{align*}
& P^- (y_1,y_2|\bar{u}_1) := \sum_{\bar{u}_2 \in \F_2^m} \frac{1}{2^m} P(y_1|\bar{u}_1+\bar{u}_2) P(y_2|\bar{u}_2) ,\\
& P^+ (y_1,y_2, \bar{u}_1 | \bar{u}_2) :=  \frac{1}{2^m} P(y_1|\bar{u}_1+\bar{u}_2) P(y_2|\bar{u}_2),
\end{align*}
for all $\bar{u}_i \in \F_2^m$, $y_i \in \Y$, $i=1,2$.
\end{definition}
That is, 
we have now two new $m$-user BMACs with extended output alphabets: 
\begin{align}
& \bU_1  \stackrel{P^-}{\rightarrow} (Y_1, Y_2), \quad 
\bU_2  \stackrel{P^+}{\rightarrow} (Y_1, Y_2, \bU_1) \label{p+}
\end{align} 
which also defines 
$I\J(P^-)$ and $I\J (P^+)$, $\forall J \subseteq \M $. 

This construction is the natural extension of the construction for $m=1,2$ in \cite{ari,2mac}. Here again,  we are comparing two independent uses of the same channel $P$ (cf. \eqref{p1}) with two successive uses of the channels $P^-$ and $P^+$ (cf. \eqref{p+}). Note that $$I\J(P^-) \leq I\J (P) \leq I\J (P^+),\quad \forall J \subseteq \M .$$ 

\begin{definition}\label{rp}
Let $\{B_n\}_{n \geq 1}$ be i.i.d. uniform random variables valued in $\{-,+\}$. Let the BMAC valued random process $\{P_n, \ n \geq 0\}$ be defined by 
\begin{align}
& P_0 := P, \notag \\
& P_n := P_{n-1}^{B_n}, \quad \forall n \geq 1. \label{ind} 
\end{align}
\end{definition}

\subsection{Discussion}
When $m=1$, we have $2 I(P) = I(P^-) + I(P^+)$, which implies that $I(P_n)$ (which in this case denotes a sequence of scalar random variables and not of functions) is a martingale. 
This allows to show that $I(P_n)$ tends to either 0 or 1, and the extremal channels of the single-user polarization scheme are either pure noise or noiseless channels.
Moreover, in the polarization of the single-user channel, the extremal channels are synthesized by using a genie aided decoder.  The genie helps the decoder in providing the correct values of the previous decisions when decoding the current channel's input.  In the polar code construction the genie is simulated by a decoder which decodes the bits successively on the synthetic channels, and uses its previous decisions assuming they are correct.  As the block error probability of the genie aided and the standalone decoder are exactly the same, it is sufficient to study the block error probability of the genie aided decoder. 
These facts then facilitates the design of a code: bits are frozen on the very noisy channels and uncoded information bits are sent on the almost noiseless channels, recovered then by using a successive decision decoder at the receiver. To show that the block error probability of this coding scheme is small, i.e., that the error caused by the successive decision decoder does not propagate, it is shown that the convergence to the ``good'' extremal channels is fast enough.  
When $m \geq 2$, several new points need to be investigated. In particular, one needs to check whether $I\J(P_n)$ still has a martingale property for different $J$'s. 
Then, if the convergence of each $I\J(P_n)$ can be proved, one has to examine whether the obtained limiting MACs are also extremal MACs, along the spirit of creating trivial channels to communicate over, as in the single-user polarization. 
Finally, one needs to ensure that the convergence of these mutual informations is taking place fast enough, 
so as to ensure a block error probability that tends to zero when successive decision decoding is used.

\section{Preliminary Results}
Summary: In Section \ref{lim}, we show that $I(P_n)$ tends a.s.\ to a matroid rank function (cf.~Definition \ref{matdef}). 
We then see in Section \ref{comm} that the extreme points of a uniform rate region with matroidal constraints can be achieved by each user sending uncoded or frozen bits; in particular the uniform sum rate can be achieved by such strategies.
We then show in Section \ref{msol}, that for arbitrary $m$, $I(P_n)$ tends not to an arbitrary matroid rank function but to the rank function of a binary matroid (cf.~Definition \ref{bin}). This is used to show that the convergence to the extremal MACs happens fast enough, which then leads to the main result of this paper, Theorem \ref{main} in Section \ref{msol}.  This theorem states that applying Ar{\i}kan's polar transform separately to each user, and using a successive decision decoder can achieve sum rates arbitrarily close to the uniform sum rate of a MAC, ensure block error probability that decays roughly like $2^{-\sqrt{n}}$ with block length, and operate with computational complexity $O(n\log n)$.

 

\subsection{The extremal MACs}\label{lim}

\begin{lemma}\label{mart}
$\{I\J(P_n) ,  n \geq 0\}$ is a bounded super-martingale when $J \varsubsetneq \M $ and a bounded martingale when $J = \M$. 
\end{lemma}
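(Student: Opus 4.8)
Boundedness is immediate: for every $n$ one has $0\le I\J(P_n)\le H(X\J)=|J|$, since the components of $\bX$ are uniform i.i.d.\ bits; in particular each $I\J(P_n)$ is integrable. Because $\{B_n\}$ is i.i.d.\ uniform and independent of $B_1,\dots,B_{n-1}$, with $P_n=P_{n-1}^{B_n}$, we get $\E[\,I\J(P_n)\mid B_1,\dots,B_{n-1}\,]=\tfrac12 I\J(P_{n-1}^-)+\tfrac12 I\J(P_{n-1}^+)$, so the whole statement reduces to the one-step channel-combining relation
\begin{equation}
\tfrac12\,I\J(P^-)+\tfrac12\,I\J(P^+)\ \le\ I\J(P),\qquad J\subseteq\M,\label{pl:comb}
\end{equation}
with equality whenever $J=\M$, which I would prove for an arbitrary BMAC $P$.

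The first step is to rewrite the synthesized quantities as conditional mutual informations. Writing $U_i\J:=\{U_i\ii:i\in J\}$, and using that $U_1\J$ is independent of $U_1\C$ and that $U_2\J$ is independent of $(\bU_1,U_2\C)$, the definitions of $P^-,P^+$ give
\[
I\J(P^-)=I(U_1\J;Y_1Y_2\mid U_1\C),\qquad I\J(P^+)=I(U_2\J;Y_1Y_2\mid \bU_1\,U_2\C).
\]
Next, apply the chain rule to $I(U_1\J U_2\J;Y_1Y_2\mid U_1\C U_2\C)$, peeling off $U_1\J$ first: the second term is exactly $I\J(P^+)$ (since $(U_1\J,U_1\C)=\bU_1$), while for the first term, expanding $I(U_1\J;Y_1Y_2\,U_2\C\mid U_1\C)$ two ways and using $U_1\J\perp(U_1\C,U_2\C)$, one finds $I(U_1\J;Y_1Y_2\mid U_1\C U_2\C)=I\J(P^-)+I(U_1\J;U_2\C\mid Y_1Y_2\,U_1\C)$. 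Hence
\begin{equation}
I(U_1\J U_2\J;Y_1Y_2\mid U_1\C U_2\C)=I\J(P^-)+I\J(P^+)+I(U_1\J;U_2\C\mid Y_1Y_2\,U_1\C).\label{pl:chain}
\end{equation}

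Finally I would evaluate the left-hand side of \eqref{pl:chain}. The coordinatewise map $(U_1\ii,U_2\ii)\mapsto(U_1\ii+U_2\ii,U_2\ii)=(X_1\ii,X_2\ii)$ is a bijection, so $(U_1\C,U_2\C)$ and $(X_1\C,X_2\C)$ generate the same conditioning $\sigma$-algebra, and $(U_1\J,U_2\J)$ is a bijective image of $(X_1\J,X_2\J)$; conditional mutual information is invariant under such relabelings, so the left-hand side equals $I(X_1\J X_2\J;Y_1Y_2\mid X_1\C X_2\C)$. Since the two channel uses in \eqref{p1} are independent, an entropy-splitting computation gives $I(X_1\J X_2\J;Y_1Y_2\mid X_1\C X_2\C)=I(X_1\J;Y_1\mid X_1\C)+I(X_2\J;Y_2\mid X_2\C)=2\,I\J(P)$ (using $X_i\J\perp X_i\C$ and the definition of $I\J$). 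Combining with \eqref{pl:chain},
\[
2\,I\J(P)=I\J(P^-)+I\J(P^+)+I(U_1\J;U_2\C\mid Y_1Y_2\,U_1\C),
\]
and the last term is $\ge 0$, which yields \eqref{pl:comb} for every $J$; when $J=\M$ the set $J^c$ is empty so that term vanishes, giving equality. The only step that is not pure mechanics is recognizing the linear reparametrization that makes the two uses of $P$ reappear blockwise inside the combined channel; the chain-rule bookkeeping and keeping track of which $U$'s are mutually independent is the main thing to be careful about, but it is routine.
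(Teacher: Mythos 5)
Your proof is correct and follows essentially the same route as the paper's: rewrite $2I\J(P)$ as the mutual information of the two-use channel, pass to the $\bU$ coordinates via the bijection, and apply the chain rule so that the two terms become $I\J(P^-)$ plus a nonnegative gap and exactly $I\J(P^+)$. The only (harmless) difference is that you make the gap term $I(U_1\J;U_2\C\mid Y_1Y_2\,U_1\C)$ explicit, which the paper only notes in passing after the proof.
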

\begin{proof}
For any $J\subseteq \M$, $I\J(P_n) \leq m$ and
\begin{align}
 2 I\J (P) &= I( X_1\J X_2\J ; Y_1 Y_2 X_1\C X_2\C)  \notag\\
&=I(U_1\J U_2\J ; Y_1 Y_2  U_1\C U_2\C) \notag \\
&= I(U_1\J ;Y_1 Y_2 U_1\C U_2\C)\notag \\&\quad + I(U_2\J ; Y_1 Y_2 U_1\C U_2\C U_1\J)\notag \\
&\geq I(U_1\J  ;Y_1 Y_2 U_1\C )\notag \\&\quad + I(U_2\J ; Y_1 Y_2 \bU_1 U_2\C) \notag \\
&= I\J(P^-) + I\J (P^+) \label{sup}.
\end{align}
If $J=\M$, the inequality above is an equality.
\end{proof}
Note that the inequality in the above are only due to the bounds on the mutual informations of the $P^-$ channel.  
Because of the equality when $J=\M$, our construction preserves the uniform sum rate.
As a corollary of previous Lemma, we have the following result. 
\begin{lemma}\label{conv}
The process $\{ I\J(P_n), J \subseteq \M\}$ converges a.s., i.e., for each $J \subseteq \M$, $\lim_{n \rightarrow \infty} I\J (P_n)$ exists a.s..
\end{lemma}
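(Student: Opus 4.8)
The plan is to deduce Lemma \ref{conv} from Lemma \ref{mart} by a straightforward application of the martingale convergence theorem, applied coordinate by coordinate over the finitely many subsets $J\subseteq\M$. First I would fix an arbitrary $J\subseteq\M$ and observe that, by Lemma \ref{mart}, the process $\{I\J(P_n),\,n\geq0\}$ is a supermartingale (a martingale if $J=\M$) with respect to the filtration $\cF_n=\sigma(B_1,\ldots,B_n)$ generated by the i.i.d.\ choices of signs in Definition \ref{rp}. Moreover it is bounded: $0\leq I\J(P_n)\leq m$ for all $n$, since $I\J(P_n)=I(X\J;YX\C)\leq H(X\J)\leq|J|\leq m$ and mutual information is nonnegative. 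In particular $\sup_n \E|I\J(P_n)|\leq m<\infty$, so the hypotheses of Doob's martingale (supermartingale) convergence theorem are met.

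The second step is simply to invoke that theorem: a bounded (more generally, an $L^1$-bounded) supermartingale converges almost surely to an integrable limit random variable, which I would call $I\J(P_\infty):=\lim_{n\to\infty}I\J(P_n)$. This gives, for each fixed $J$, a set $\Omega_J$ of probability one on which the limit exists. Finally, since $\M$ is a finite set, its power set $2^{\M}$ is finite, and the intersection $\Omega^\star:=\bigcap_{J\subseteq\M}\Omega_J$ is a finite intersection of probability-one events, hence still has probability one. On $\Omega^\star$ all the limits $\lim_{n\to\infty}I\J(P_n)$ exist simultaneously for every $J\subseteq\M$, which is exactly the assertion of the lemma; one may then speak of the limiting vector $\big(I\J(P_\infty)\big)_{J\subseteq\M}$, or equivalently the limiting set function $I(P_\infty)$.

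There is essentially no hard part here: this is a routine corollary, and the only things to be careful about are stating clearly which filtration is used (the natural one generated by $\{B_n\}$, under which the processes in Lemma \ref{mart} are adapted and have the claimed one-step (super)martingale property), and noting the uniform bound $m$ that makes the convergence theorem applicable. If anything, the one point worth a sentence is that almost-sure convergence for each $J$ individually upgrades to simultaneous almost-sure convergence precisely because there are only finitely many subsets $J$ to consider, so no uniformity or union-bound-over-infinitely-many-events subtlety arises. The limiting object $I(P_\infty)$ produced here is what the subsequent sections (Section \ref{lim} and onward) will identify as a matroid rank function.
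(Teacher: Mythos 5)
Your proof is correct and is exactly the argument the paper intends: the paper states this lemma as an immediate corollary of Lemma \ref{mart}, relying on Doob's convergence theorem for bounded (super)martingales applied to each of the finitely many $J\subseteq\M$. Nothing to add.
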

Note that for a fixed $n$, $\{ I\J(P_n), J \subseteq \M\}$ denotes the collection of the $2^m$ random variables $I\J(P_n)$, for $J \subseteq \M$.
When the convergence takes place (this is an a.s. event), let us define 
$$I_\infty \J := \lim_{n \rightarrow \infty} I\J(P_n)$$ 
and $I_\infty$ to be the function $J \mapsto I_\infty \J$.

From previous theorem, $I_\infty \J$ is a random variable valued in $[0, |J|]$.  
We will now further characterize these random variables. 

The following Lemma is proved in \cite{ari}.
\begin{lemma}\label{bjg}
For any $\e >0$, there exists $\delta>0$ such that 
$I(A_2;B_1 B_2 A_1)- I(A_2;B_2) < \delta$ implies  $$I(A_2;B_2) \in [0,\e) \cup (1-\e,1],$$
whenever $(A_1,A_2,B_1,B_2)$ are random variables valued in $\F_2 \times \F_2 \times \mathcal{B} \times \mathcal{B}$, with $\mathcal{B}$ any set, and
$$\pp_{A_1 A_2 B_1 B_2} (a_1,a_2,b_1,b_2) = \frac{1}{4} Q(b_1|a_1+ a_2) Q(b_2|a_2), $$
for any $a_i \in \F_2$, $b_i \in \mathcal{B}$, $i=1,2$, where $Q$ is a binary input $\mathcal{B}$-output channel.
\end{lemma}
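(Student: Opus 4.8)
The plan is to read the statement as the $m=1$ instance of the polarization step. With the stated joint law, $B_2$ is the output of $Q$ fed the uniform bit $A_2$, so (writing $I(Q)$ for the uniform-input mutual information of $Q$) $I(A_2;B_2)=I(Q)$; and $(B_1,B_2,A_1)$ is precisely the output of the ``$+$'' channel $Q^+$ when fed $A_2$, so $I(A_2;B_1B_2A_1)=I(Q^+)$. Thus the hypothesis is $I(Q^+)-I(Q)<\delta$, and I must show that small $\delta$ pins $I(Q)$ near $0$ or near $1$. The first step is to rewrite the gap: by the chain rule $I(A_2;B_1B_2A_1)-I(A_2;B_2)=I(A_2;A_1B_1\mid B_2)$, and since $A_1\perp(A_2,B_2)$ this equals $I(A_2;B_1\mid A_1,B_2)\ge 0$. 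Conditioned on $\{A_1=a_1,B_2=b_2\}$, the bit $A_2$ is $\mathrm{Bern}(q(b_2))$ with $q(b_2):=\pp(A_2=1\mid B_2=b_2)$ (the value $a_1$ dropping out by independence), and $B_1$ is the output of the input-relabelled channel $x\mapsto Q(\,\cdot\mid x+a_1)$ fed this bit; so the gap is the $(a_1,b_2)$-average of mutual informations $I(\mathrm{Bern}(q(b_2));\,\text{output of a relabelled copy of }Q)$.

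Next comes the pointwise claim: if this gap is $0$, then $I(Q)\in\{0,1\}$. Vanishing of the nonnegative average forces, for a.e.\ $b_2$, that $I(\mathrm{Bern}(q(b_2));\,\cdot\,)=0$; when $q(b_2)\in(0,1)$ this means the two output laws $Q(\,\cdot\mid 0)$ and $Q(\,\cdot\mid 1)$ coincide, i.e.\ $Q$ is useless and $I(Q)=0$ (and relabelling inputs by $a_1$ is harmless for this conclusion). Hence if $I(Q)>0$ then $q(B_2)\in\{0,1\}$ almost surely, so $B_2$ determines $A_2$ and $I(Q)=I(A_2;B_2)=1$.

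To upgrade this into the claimed uniform $\delta$ I would use a compactness argument, encoding a binary-input channel by the law $\mu\in\mathcal P([0,1])$ of the posterior $\Pi:=\pp(A=1\mid B)$ under a uniform input $A$. The three relevant quantities depend on $Q$ only through $\mu$: $I(Q)=1-\E_{\Pi\sim\mu}[\ent(\Pi)]$, $I(Q^-)=1-\iint \ent(\pi_1\star\pi_2)\,d\mu(\pi_1)\,d\mu(\pi_2)$ with $\pi_1\star\pi_2:=\pi_1(1-\pi_2)+(1-\pi_1)\pi_2$, and $I(Q^+)=2I(Q)-I(Q^-)$. All three are weakly continuous in $\mu$, $\mathcal P([0,1])$ is weakly compact, and $K_\e:=\{\mu:\ I(Q)\in[\e,1-\e]\}$ is weakly closed, hence compact. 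By the pointwise claim $g(\mu):=I(Q^+)-I(Q)$ is strictly positive on $K_\e$ (if $K_\e=\emptyset$ there is nothing to prove), so $\delta:=\min_{\mu\in K_\e}g(\mu)>0$ works, and contraposition gives the lemma.

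The hard part is the pointwise step: getting the case analysis right --- separating a genuinely useless $Q$ from the case $q(b_2)\in\{0,1\}$, and checking the $\pm$-relabelling by $A_1$ never hurts --- together with enough measure-theoretic care that the compactness argument is valid for an \emph{arbitrary} (possibly infinite) output alphabet $\Y$; passing to the posterior law $\mu$ is exactly what makes this last point work. Alternatively one can avoid compactness with a direct quantitative estimate through the Bhattacharyya parameter, using $Z(Q^+)=Z(Q)^2$ and the standard sandwich inequalities between $I(Q)$ and $Z(Q)$ to pin $Z(Q)$ near a fixed point of $z\mapsto z^2$, but that route is more computational.
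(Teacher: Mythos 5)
Your argument is correct, but it is worth noting that the paper itself gives no proof of this lemma: it defers to \cite{ari}, and Ar{\i}kan's route to the analogous fact is not the $\e$--$\delta$ statement you prove but a stochastic-process argument --- the Bhattacharyya parameter $Z_n$ is a bounded supermartingale, $Z(W^+)=Z(W)^2$ forces $\E|Z_{n+1}-Z_n|\to0$ to pin $Z_\infty$ at $\{0,1\}$, and the $I$--$Z$ inequalities transfer this to $I_\infty$. (The uniform ``small gap $\Rightarrow$ near-extremal'' formulation quoted here is closer in spirit to the lemmas of \cite{qpol} and \cite{2mac}.) Your proof is a genuinely different and self-contained route: the chain-rule identity reducing the gap to $I(A_2;B_1\mid A_1,B_2)$, the pointwise dichotomy (either $Q$ is useless or $B_2$ determines $A_2$), and the compactness of the set of posterior laws $\mu\in\mathcal P([0,1])$ to upgrade the pointwise statement to a uniform $\delta$. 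This buys a proof that works verbatim for arbitrary output alphabets and delivers exactly the $\e$--$\delta$ form the paper invokes in Lemma~4, at the cost of being non-quantitative (the compactness extraction gives no explicit $\delta(\e)$, whereas the Bhattacharyya route you sketch as an alternative does). Two small points to tighten: restrict the compact set to measures with mean $1/2$ (so that every point of $K_\e$ is realized by an actual channel and the pointwise dichotomy applies to all of $K_\e$, not just to the image of channels), and note that weak continuity of $\mu\mapsto\E_{\mu\otimes\mu}\bigl[\ent(\Pi_1\star\Pi_2)\bigr]$ uses that $\mu_n\to\mu$ weakly implies $\mu_n\otimes\mu_n\to\mu\otimes\mu$ weakly. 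Neither is a gap, just a sentence each.
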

This Lemma is used to prove the following. 
\begin{lemma}\label{mjg}
For any $\e>0$ and any $m$-user BMAC $P$, there exists $\delta>0$, such that for any $J\subseteq \M$, if $I\J(P^+) - I\J (P) < \delta$, we have $$ I\J (P) - I\K (P) \in [0,\e) \cup (1-\e,1],\quad \forall i \in J,$$ where $ I [\emptyset] (P)= 0$.
\end{lemma}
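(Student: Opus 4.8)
The goal is to reduce Lemma \ref{mjg} to the binary case already established in Lemma \ref{bjg}. The plan is to fix $J \subseteq \M$ and $i \in J$, and to show that the hypothesis $I\J(P^+) - I\J(P) < \delta$ forces a gap quantity of the form $I(A_2; B_1 B_2 A_1) - I(A_2; B_2)$ to be small, for a suitable choice of the binary random variables $A_1, A_2$ and the (large-alphabet) side variables $B_1, B_2$, built out of the random variables appearing in the definition of $P^+$. Concretely, I would take $A_1 := U_1\ii$, $A_2 := U_2\ii$, and bundle everything else into $B_1, B_2$: set $B_1 := (Y_1, X_1\ic)$ and $B_2 := (Y_2, X_2\ic)$ — no wait, I need the side information to be exactly the variables that make $I\J(P)$ and $I\K(P)$ appear as $I(A_2;B_2)$ and a related quantity. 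So more carefully: recall $I\J(P) = I(X\J; Y X\C)$ and $I\K(P) = I(X\K; Y X\C)$ where $\K = J \setminus i$. Then $I\J(P) - I\K(P) = I(X\ii; Y X\C X\K) = I(X\ii; Y X\ic)$ by the chain rule, since $J^c \cup (J\setminus i) = \{i\}^c$. This is exactly a single-user mutual information of user $i$'s bit against everything else, which is the "$I(A_2;B_2)$" slot.

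**Reduction to Lemma \ref{bjg}.** Next I would do the same decomposition one level up in the polarization recursion. Writing out $2 I\J(P) = I\J(P^-) + I\J(P^+)$ from the proof of Lemma \ref{mart} (this is an equality only when $J = \M$, but in general $2I\J(P) \ge I\J(P^-)+I\J(P^+)$, and the slack is controlled), and doing the analogous split for $I\K$, I would isolate the "increment in $i$" at the $P^+$ level: the quantity $I\J(P^+) - I\K(P^+)$ should equal $I(U_2\ii; (Y_1 Y_2 \bU_1 U_2\C) \, U_2\K)$, i.e.\ a single-user-$i$ mutual information for the $P^+$ channel, hence of the form $I(A_2; B_1 B_2 A_1)$ with $A_1 = U_1\ii$, $A_2 = U_2\ii$, and $B_1, B_2$ collecting $(Y_1, \text{other } U_1\text{-coords})$ and $(Y_2, \text{other } U_2\text{-coords})$ respectively, arranged so that the joint law has exactly the product form $\tfrac14 Q(b_1|a_1+a_2)Q(b_2|a_2)$ required by Lemma \ref{bjg} — here $Q$ is user $i$'s effective single-user channel inside $P$ after conditioning on the other users' inputs and the rest of the genie information. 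One must check that conditioning on those extra coordinates still leaves a channel of the right form; this is a routine Bayes computation but it is the place where the specific "+"-transform structure is used.

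**Assembling the bound.** With that in hand, observe $I\J(P^+) - I\J(P)$ small (the hypothesis, $<\delta$) together with the super-martingale identity forces $I\K(P^+) - I\K(P)$ small as well, and subtracting gives $\big(I\J(P^+) - I\K(P^+)\big) - \big(I\J(P) - I\K(P)\big)$ small, i.e.\ the "$I(A_2;B_1B_2A_1) - I(A_2;B_2)$" gap is below some $\delta'(\delta) \to 0$. Then Lemma \ref{bjg} applies verbatim and yields $I(A_2;B_2) = I\J(P) - I\K(P) \in [0,\e) \cup (1-\e, 1]$, which is the claim; one chooses $\delta$ at the end so that the induced $\delta'$ is below the threshold Lemma \ref{bjg} demands for the given $\e$. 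Finally I would note the bound is uniform over $J$ and $i$ because there are only finitely many such pairs and the channel $Q$ in each instance is a legitimate binary-input channel, so a single $\delta$ works.

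**Main obstacle.** The delicate point is not the information-theoretic inequality chasing but verifying that, after conditioning on all the "other" coordinates ($U_1\ic$-type and $U_2\ic$-type variables, plus the $J\setminus i$ coordinates and the outputs), the conditional joint distribution of $(U_1\ii, U_2\ii, B_1, B_2)$ genuinely has the factorized form $\tfrac14 Q(b_1 | a_1 + a_2) Q(b_2 | a_2)$ with a common channel $Q$ — i.e.\ that the "$+$" combining structure is preserved coordinate-wise under this conditioning. This requires using the independence of the components of $\bU_1, \bU_2$ and the memoryless/product structure of the two channel uses carefully; it is the step where the argument could hide a subtlety, so I would write that conditional-law computation out in full.
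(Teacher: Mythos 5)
Your overall strategy---reducing to Lemma \ref{bjg} with $A_k=U_k\ii$ and suitable side variables $B_k$---is the right one and is what the paper does, but two steps of your execution do not go through. First, you misidentify the increment. Since $I\K(P)$ is by definition $I(X\K;Y X[(J\setminus i)^c])=I(X\K;Y X\C X\ii)$, the chain rule (splitting off $X\ii$ first) gives
$$I\J(P)-I\K(P)=I(X\ii;Y X\C),$$
in which the coordinates in $J\setminus i$ are \emph{marginalized out}, not $I(X\ii;Y X\C X\K)=I(X\ii;Y X\ic)$, in which they are conditioned on. These two quantities differ in general, so establishing the $[0,\e)\cup(1-\e,1]$ dichotomy for $I(X\ii;Y X\ic)$ would not prove the lemma. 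The correct instantiation of Lemma \ref{bjg} is $B_k=(Y_k,X_k\C)$ with $Q(y,x\C\,|\,x\ii)=\frac{1}{2^{m-1}}\sum_{x\jj,\,j\in J\setminus i}P(y|\bar x)$; with that choice the product form $\frac14 Q(b_1|a_1+a_2)Q(b_2|a_2)$ you worry about in your last paragraph is immediate (the $X_k\C$ are simply uniform outputs of $Q$ and the two channel uses are independent), so that is not where the difficulty lies.

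Second, your route to showing the gap $I(A_2;B_1B_2A_1)-I(A_2;B_2)$ is small relies on $I\K(P^+)-I\K(P)$ also being small, which is not part of the hypothesis: the lemma assumes only $I\J(P^+)-I\J(P)<\delta$ for the given $J$, and neither the supermartingale inequality nor anything else lets you transfer smallness from $J$ to the strictly smaller set $J\setminus i$. The paper needs no such transfer: writing $I\J(P^+)-I\J(P)=I(U_2\J;Y_1\bU_1\,|\,Y_2U_2\C)$ by the chain rule and then dropping $U_2\K$ from the first argument and $U_1\K$ from the second gives
$$I\J(P^+)-I\J(P)\ \geq\ I(U_2\ii;Y_1U_1\ii U_1\C\,|\,Y_2U_2\C)=I(A_2;B_1B_2A_1)-I(A_2;B_2),$$
so the hypothesis on $J$ alone already bounds the binary gap, and Lemma \ref{bjg} plus the identity $I\J(P)=I\K(P)+I(U_2\ii;Y_2X_2\C)$ finishes the proof. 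You should replace your subtraction argument with this direct monotonicity bound and correct the identification of $I\J(P)-I\K(P)$.
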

\begin{proof} 
Let $i \in J$. Note that
\begin{align*}
& I\J(P^+) - I\J (P) \\
&= I(U_2\J; Y_1 Y_2 \bU_1  U_2\C) - I(U_2\J;Y_2  U_2\C)\\
&= I(U_2\J; Y_1 \bU_1 |Y_2  U_2\C)\\
&\geq I(U_2\ii; Y_1 U_1\ii U_1\C |Y_2  U_2\C)\\
&= I(U_2\ii; Y_1 U_1\C Y_2 U_2\C  U_1\ii ) - I(U_2\ii;Y_2 U_2\C) \\
&= I(U_2\ii; Y_1 X_1\C Y_2 X_2\C  U_1\ii ) - I(U_2\ii;Y_2 X_2\C) .
\end{align*}
Using Lemma \ref{bjg} with $A_k=U_k\ii$, $B_k=Y_k  X_k\C$, $k=1,2$, and 
$$Q(y , x\C | x\ii)= \frac{1}{2^{m-1}} \sum_{x\jj \in \F_2, j\notin J^c \cup \{i\} } P(y|\bar{x}), $$ we conclude that we can take $\delta$ small enough, so that $I\J(P^+) - I\J (P) < \delta$ implies $ I(U_2\ii;Y_2 X_2\C) \in [0,\e) \cup (1-\e,1] $.
Moreover, we have
$$ I\J(P) = I\K(P) + I(U_2\ii;Y_2X_2\C). \qedhere$$
\end{proof}

\begin{lemma}\label{01}
With probability one, 
$I_\infty \J - I_\infty \K \in \{0,1\}$, $\forall J \subseteq \M, i \in J $, where $I_\infty [\emptyset]:=0$.
\end{lemma}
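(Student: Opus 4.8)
The plan is to combine Lemma \ref{mjg} with the fact that the ``one-step gap'' $I\J(P_n^+)-I\J(P_n)$ becomes arbitrarily small as $n$ grows, and then to let $n\to\infty$ inside the conclusion of that lemma; the almost sure convergence of each of the $2^m$ processes $I\J(P_n)$ to $I_\infty\J$ is Lemma \ref{conv}, which I take as given.

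\emph{Step 1 (the gap vanishes in probability).} Fix $J\subseteq\M$ and let $\mathcal F_n:=\sigma(B_1,\dots,B_n)$, so that $P_n$ is $\mathcal F_n$-measurable while $B_{n+1}$ is independent of $\mathcal F_n$ and uniform on $\{-,+\}$. Since $I\J(P_n)$ converges a.s.\ (Lemma \ref{conv}) to the finite limit $I_\infty\J$, its increments $I\J(P_{n+1})-I\J(P_n)$ tend to $0$ a.s. Set $Z_n:=I\J(P_n^+)-I\J(P_n)$; it is $\mathcal F_n$-measurable and takes values in $[0,m]$ by the bounds $I\J(P^-)\le I\J(P)\le I\J(P^+)$, and since $P_{n+1}=P_n^+$ on the event $\{B_{n+1}=+\}$ we have $Z_n\mathbf 1_{\{B_{n+1}=+\}}=\bigl(I\J(P_{n+1})-I\J(P_n)\bigr)\mathbf 1_{\{B_{n+1}=+\}}$, which tends to $0$ a.s.\ and is bounded by $m$. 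By dominated convergence $\E[Z_n\mathbf 1_{\{B_{n+1}=+\}}]\to 0$, while $\E[Z_n\mathbf 1_{\{B_{n+1}=+\}}]=\tfrac12\E[Z_n]$ because $B_{n+1}$ is independent of $\mathcal F_n$; hence $Z_n\to 0$ in $L^1$, in particular in probability. As there are finitely many subsets $J$, I then extract a single subsequence $(n_k)$ along which $I\J(P_{n_k}^+)-I\J(P_{n_k})\to 0$ almost surely for every $J\subseteq\M$ at once.

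\emph{Step 2 (passing to the limit in Lemma \ref{mjg}).} Fix $\e>0$ and let $\delta=\delta(\e)>0$ be the constant provided by Lemma \ref{mjg} (its proof in fact produces a $\delta$ that depends only on $\e$, uniformly over $m$-user BMACs). Work on the almost sure event on which, for every $J$, $I\J(P_n)\to I_\infty\J$ and $I\J(P_{n_k}^+)-I\J(P_{n_k})\to 0$. For each $J\subseteq\M$ and $i\in J$, take $k$ large enough that $I\J(P_{n_k}^+)-I\J(P_{n_k})<\delta$; Lemma \ref{mjg} then gives $I\J(P_{n_k})-I\K(P_{n_k})\in[0,\e)\cup(1-\e,1]$, and letting $k\to\infty$ yields $I_\infty\J-I_\infty\K\in[0,\e]\cup[1-\e,1]$. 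Since this holds on a single probability-one event for every $\e>0$, on that event $I_\infty\J-I_\infty\K\in\bigcap_{\e>0}\bigl([0,\e]\cup[1-\e,1]\bigr)=\{0,1\}$ for all $J\subseteq\M$, $i\in J$ (the case $J=\{i\}$ being handled by the convention $I_\infty[\emptyset]=0$), which is the assertion.

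\emph{Main obstacle.} The delicate point is Step 1. Because $I\J(P_n)$ is only a bounded super-martingale when $J\varsubsetneq\M$ (Lemma \ref{mart}), I cannot control $\E[Z_n^2]$ through orthogonality of martingale increments — the quickest route, which would be available if $I\J(P_n)$ were a martingale — so instead the conditioning identity $\E[Z_n\mathbf 1_{\{B_{n+1}=+\}}]=\tfrac12\E[Z_n]$ is what converts the plain a.s.\ convergence of $I\J(P_n)$ into the vanishing of the gap. A classical alternative would exploit that $I[\M](P_n)$ \emph{is} a bounded martingale, deduce $I[\M](P_n^+)-I[\M](P_n)\to 0$ in $L^2$, and then bound $I\J(P^+)-I\J(P)\le I[\M](P^+)-I[\M](P)$; but that domination requires its own short argument, so the super-martingale route above looks shortest given the stated lemmas. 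Everything after Step 1 is routine bookkeeping.
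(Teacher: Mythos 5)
Your proof is correct and follows essentially the same route as the paper's: deduce from the a.s.\ convergence of $I\J(P_n)$ that the expected one-step gap $\E[I\J(P_n^+)-I\J(P_n)]$ vanishes (via the factor $\tfrac12$ coming from the uniform $B_{n+1}$), and then feed this into Lemma~\ref{mjg}. You merely spell out two steps the paper leaves implicit --- the subsequence extraction to upgrade convergence in probability of the gap to a.s.\ convergence, and the $\e\to0$ intersection argument --- and correctly note that the $\delta$ of Lemma~\ref{mjg} is uniform over channels, which the paper's application to the random $P_n$ tacitly requires.
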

\begin{proof}
From Lemma \ref{conv}, we have that $I[J] (P_n)$ converges a.s., hence $\lim_{n \rightarrow \infty} |I[J] (P_{n+1}) - I[J] (P_n)|=0$ a.s. \, Moreover, by definition of $P_n$, $|I[J] (P_{n+1}) - I[J] (P_n)|$ is equal to $I[J](P_n^+)-I[J](P_n)$ w.p. half and $I[J](P_n)-I[J](P_n^-)$ w.p. half. Hence, from \eqref{sup}, $\E |I[J] (P_{n+1}) - I[J](P_n)| \geq \frac{1}{2}( I[J](P_n^+)-I[J](P_n))$. But $|I[J] (P_{n+1}) - I[J] (P_n)|$ is bounded by $m$, hence $\lim_{n \rightarrow \infty} \E |I[J] (P_{n+1}) - I[J] (P_n)| =0$ and $\lim_{n \rightarrow \infty}I[J](P_n^+)-I[J](P_n)=0$. Finally, we conclude using Lemma \ref{mjg}.
\end{proof}

Note that Lemma \ref{01} implies in particular that 
$\{ I_\infty \J, J \subseteq \M\}$ is a.s. a discrete random vector.
\begin{definition}
We denote by $\mathcal{A}_m$ the support of $\{ I_\infty \J, J \subseteq \M\}$ (when the convergence takes place, i.e., a.s.). This is a subset of $\{0,\ldots,m\}^{2^m}$.
\end{definition}
We have already seen that not every element in $\{0,\ldots,m\}^{2^m}$ can belong to $\A_m$. We will now further characterize the set $\A_m$.
\begin{definition}
A polymatroid is a set $\M$, called a ground set, equipped with a function $f: 2^m \rightarrow \mR$ (where $2^m$ denotes the power set of $\M$), called a rank function, which satisfies 
\begin{align*}
& f(\emptyset)=0,\\
&f[J] \leq f[K], \quad  \forall J \subseteq K \subseteq \M, \\
&f[J \cup K]  + f[J \cap K] \leq f[J] + f[K] , \quad \forall J,K \subseteq \M.
\end{align*}
\end{definition}

A proof of the following result can be found in \cite{hanly}. 
\begin{thm}
For any MAC and any distribution of the inputs $X[E]$, we have that 
$\rho(S)=I(X[S];YX[S^c])$ is a rank function on $E$,
where we denote by $Y$ the output of the MAC when the input is $X[E]$. Hence, $(E,\rho)$ is a polymatroid.
\end{thm}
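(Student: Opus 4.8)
The plan is to verify the three defining axioms of a polymatroid rank function directly for $\rho(S) = I(X[S]; Y X[S^c])$. The normalization $\rho(\emptyset) = I(\emptyset; Y X[E]) = 0$ is immediate from the convention that conditioning on everything and revealing nothing carries no information. For the remaining two properties I would lean entirely on standard information-theoretic identities (chain rule, nonnegativity of conditional mutual information, and the fact that conditioning can only be exploited via the chain rule — there is \emph{no} appeal to $X[E]$ being uniform or independent here, which matches the theorem's claim of generality).

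For monotonicity, fix $S \subseteq T \subseteq E$. I would write $\rho(T) = I(X[T]; Y X[T^c])$ and expand $X[T] = (X[S], X[T\setminus S])$ using the chain rule:
\begin{align*}
\rho(T) &= I(X[S]; Y X[T^c]) + I(X[T\setminus S]; Y X[T^c] \,|\, X[S]) \\
&= I(X[S]; Y X[S^c]) - I(X[S]; X[T\setminus S]) + I(X[T\setminus S]; Y X[T^c] \,|\, X[S]),
\end{align*}
after noting $X[T^c]$ together with $X[T\setminus S]$ makes up $X[S^c]$. Since the subtracted term $I(X[S]; X[T\setminus S])$ need not vanish for dependent inputs, the cleaner route is to instead observe directly that $Y X[S^c]$ is a (deterministic) function of $(Y, X[T^c], X[T\setminus S])$ together with $X[T]$, so that $\rho(T) = I(X[S] X[T\setminus S]; Y X[T^c]) \ge I(X[S]; Y X[T^c] X[T\setminus S]) = I(X[S]; Y X[S^c]) = \rho(S)$, using the data-processing/chain-rule inequality $I(AB;C) \ge I(A; CB)$. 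This step is routine once set up correctly.

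The submodularity inequality $\rho(J \cup K) + \rho(J \cap K) \le \rho(J) + \rho(K)$ is the substantive step and the one I expect to be the main obstacle, purely at the level of bookkeeping the conditioning sets correctly. The strategy is to let $A = X[J\setminus K]$, $B = X[K\setminus J]$, $C = X[J\cap K]$, and $D = X[(J\cup K)^c]$, so that $X[J^c] = (B, D)$, $X[K^c] = (A, D)$, $X[(J\cap K)^c] = (A,B,D)$, $X[(J\cup K)^c] = D$, and then reduce everything to the elementary inequality
$$I(AC; YBD) + I(C; YABD) \le I(AC; YD) + I(BC; YAD),$$
which in turn follows from the chain rule by expanding both sides into sums of conditional mutual informations and comparing term by term (the net difference being a single conditional mutual information, hence $\ge 0$). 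I would carry this out by writing $I(AC; YBD) = I(C; YBD) + I(A; YBD\,|\,C)$ and $I(BC; YAD) = I(C; YAD) + I(B; YAD\,|\,C)$, and similarly decomposing the $J\cup K$ and $\emptyset$-side terms, then using monotonicity of conditional mutual information in the conditioning/side variables to match up the pieces. The only real care needed is to keep the four disjoint blocks $A,B,C,D$ straight; no deep idea beyond the chain rule is involved, which is why a reference to \cite{hanly} suffices for the full details.
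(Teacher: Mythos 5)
The paper does not actually prove this statement --- it is quoted from \cite{hanly} --- so the only question is whether your argument stands on its own. It does not, for two reasons, both traceable to your explicit claim that ``there is no appeal to $X[E]$ being uniform or independent here.'' Independence of the inputs is in fact essential (uniformity is not), and the theorem as literally stated for ``any distribution'' is false without it: take $E=\{1,2\}$, $X[1]=X[2]$ a uniform bit, and $Y$ independent of the inputs; then $\rho(\{1\})=I(X[1];YX[2])=H(X[1])=1$ while $\rho(\{1,2\})=I(X[1]X[2];Y)=0$, so monotonicity fails. Correspondingly, the inequality you invoke for monotonicity, $I(AB;C)\ge I(A;CB)$, is not a valid general inequality: by the chain rule $I(AB;C)-I(A;CB)=I(B;C)-I(A;B)$, which is nonnegative precisely because $I(A;B)=0$ when the inputs are independent (the same example, with $A=B$ and $C$ independent of both, gives $0\ge H(A)$). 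So this step is correct only once the independence hypothesis is put back in, which is how \cite{hanly} states and proves the result.

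The submodularity step has a second problem. With your labels $A=X[J\setminus K]$, $B=X[K\setminus J]$, $C=X[J\cap K]$, $D=X[(J\cup K)^c]$, the four relevant quantities are $\rho(J)=I(AC;YBD)$, $\rho(K)=I(BC;YAD)$, $\rho(J\cup K)=I(ABC;YD)$ and $\rho(J\cap K)=I(C;YABD)$, so the inequality to prove is $I(ABC;YD)+I(C;YABD)\le I(AC;YBD)+I(BC;YAD)$; the display you wrote, with $I(AC;YBD)$ on the left and $I(AC;YD)$ on the right, is not this inequality and does not correspond to submodularity. Once corrected, the term-by-term comparison you describe does go through, but after cancellation the residual is $I(B;Y|AD)-I(B;Y|D)$, and showing this is nonnegative again uses $I(A;B|D)=0$, i.e.\ independence of the inputs. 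In short: the skeleton (verify the three axioms via the chain rule) is the right one, but both nontrivial axioms require the independence you set out to avoid, and the submodularity target needs to be restated correctly before the bookkeeping can be checked.
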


Therefore, any realization of $I(P_n) $ is a rank function and the elements of $\mathcal{A}_m$ are the image of a polymatroid rank function. 
\begin{definition}\label{matdef}
A matroid is a polymatroid whose rank function is integer valued and satisfies $f(J) \leq |J|$, $\forall J \subseteq \M$. We denote by $\mathrm{MAT}_m$ the set of all matroids with ground state $\M$. We use the notation $r_\mathbb{B}$ to refer to the rank function of a matroid $\mathbb{B}$. We will sometimes identify a matroid with its rank function image, in which case, we consider an element of $\mathrm{MAT}_m$ as a $2^m$ dimensional integer valued vector. 
We also define a basis of a matroid by the collection of maximal subsets of $\M$ for which $f(J) = |J|$. 
\end{definition}
Using Lemma \ref{01} and the definition of a matroid, we have the following result.

\begin{thm}\label{mat}
For every $m \geq 1$, $\mathcal{A}_m \subseteq \mathrm{MAT}_m,$ i.e., $I_\infty$ is a matroid rank function.
\end{thm}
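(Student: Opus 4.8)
The plan is to show that any element $v \in \A_m$, viewed as a function $v: 2^{E_m} \to \{0,\ldots,m\}$, is the rank function of a matroid. We already know three of the four required facts for free: by Theorem (the Han--Hanly polymatroid theorem quoted just above) every realization $I(P_n)$ is a polymatroid rank function, so $v(\emptyset)=0$, $v$ is monotone, and $v$ is submodular --- these properties are closed under taking a.s.\ limits, hence they pass to $I_\infty$ and to every point of its support. The only extra ingredients needed to upgrade ``polymatroid'' to ``matroid'' are that $v$ is integer valued and that $v(J)\le |J|$ for all $J\subseteq \M$.

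First I would prove $v(J)\le|J|$. For any $n$ and any $J$ we have $I\J(P_n) = I(X\J;YX\C) \le H(X\J) = |J|$, since the inputs are uniform i.i.d.\ binary; taking the a.s.\ limit gives $I_\infty\J \le |J|$, and this inequality is inherited by every element of the support $\A_m$. Next, the integrality: here is where Lemma \ref{01} does the work. Fix a realization in the a.s.\ event where the limit exists and where $I_\infty\J - I_\infty\K \in \{0,1\}$ holds for all $J\subseteq\M$ and all $i\in J$ (recall $I_\infty[\emptyset]:=0$). Given any $J = \{i_1,\ldots,i_k\}$, build the chain $\emptyset \subset \{i_1\} \subset \{i_1,i_2\} \subset \cdots \subset J$, and write $I_\infty\J$ as the telescoping sum $\sum_{\ell=1}^{k} \bigl(I_\infty[\{i_1,\ldots,i_\ell\}] - I_\infty[\{i_1,\ldots,i_{\ell-1}\}]\bigr)$. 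Each summand is of the form $I_\infty[J'] - I_\infty[J'\setminus i]$ with $i\in J'$, hence lies in $\{0,1\}$ by Lemma \ref{01}; therefore $I_\infty\J$ is a non-negative integer. Since this holds simultaneously for all $J$ on this full-measure event, every point of $\A_m$ is integer valued.

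Combining: an arbitrary $v\in\A_m$ satisfies $v(\emptyset)=0$, is monotone, is submodular (from the polymatroid theorem), is integer valued (from the telescoping argument using Lemma \ref{01}), and satisfies $v(J)\le|J|$; by Definition \ref{matdef} this is exactly a matroid rank function, so $\A_m \subseteq \mathrm{MAT}_m$, which is the claim. I do not expect a serious obstacle here --- the substantive analytic content (the $\{0,1\}$ gap) is already isolated in Lemma \ref{01}; the only point requiring a little care is to make sure integrality is asserted as a statement about the \emph{support} $\A_m$ rather than merely ``almost surely'', which is handled by noting that all of the above holds on a single a.s.\ event and the support of a discrete random vector is contained in the set of values it takes on any a.s.\ event.
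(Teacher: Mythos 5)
Your proposal is correct and matches the paper's (implicit) argument: the paper derives Theorem \ref{mat} exactly by combining the Tse--Hanly polymatroid property of $I(P_n)$ (preserved in the a.s.\ limit) with the $\{0,1\}$-increment property of Lemma \ref{01}, the telescoping of which yields both integrality and $v(J)\le|J|$. Your write-up simply makes explicit the details the paper leaves to the reader, including the correct handling of the support versus the a.s.\ event.
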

We will see that the inclusion is strict for $m \geq 4$.

\subsection{Communicating on MACs with matroidal regions}\label{comm}

We have shown that, when $n$ tends to infinity, the MACs that we create with the polarization construction of Section \ref{cons} are particular MACs: the mutual informations $I_\infty \J$ are integer valued (and satisfy the other matroid properties). A well-known result in matroid theory (cf. Theorem 22 of \cite{edm}) says that the vertices of a polymatroid given by a rank function $f$ are the vectors of the following form:
\begin{align*}
& x_{j(1)} = f(A_1),\\
& x_{j(i)} = f(A_i)-f(A_{i-1}), \quad \forall 2 \leq i \leq k \\
& x_{j(i)}=0, \quad \forall k < i \leq m,
\end{align*}
for some $k \leq m$, $j(1),j(2),\ldots,j(m)$ distinct in $\M$ and $A_i = \{j(1),j(2),\ldots,j(i)\}$, where the vertices strictly in the positive orthant are given for $k=m$.

Therefore, we have the following corollary.
\begin{corol}
\label{cor:01}
The uniform rate region defined by an element of $\mathcal{A}_m$ has vertices on the hypercube $\{0,1\}^m$.  In particular, to communicate at a rate $m$-tuple which is a vertex of such a MAC uniform rate region, each user communicates on either a noiseless or pure noise channel.
\end{corol}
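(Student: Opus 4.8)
The plan is to combine Theorem~\ref{mat} with the vertex characterization of polymatroids quoted just above the corollary (Theorem~22 of \cite{edm}). By Theorem~\ref{mat}, any element of $\mathcal{A}_m$ is a matroid rank function $f$ on $\M$, so $f$ is monotone and submodular, $f(\emptyset)=0$, $f$ is integer valued, and $f(J)\le |J|$ for all $J\subseteq\M$. The Edmonds description says every vertex of the uniform rate region $\mathcal{I}$ associated to $f$ has coordinates $x_{j(i)}=f(A_i)-f(A_{i-1})$ along some chain $\emptyset=A_0\subsetneq A_1\subsetneq\cdots\subsetneq A_k$ with $A_i=A_{i-1}\cup\{j(i)\}$, the remaining coordinates being $0$. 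Hence it suffices to show that each increment $f(A_i)-f(A_{i-1})$ lies in $\{0,1\}$.

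First I would note $f(A_i)-f(A_{i-1})\ge 0$, which is just monotonicity. Then I would bound it above by $1$: applying submodularity to the sets $A_{i-1}$ and $\{j(i)\}$ gives $f(A_i)+f(\emptyset)=f(A_{i-1}\cup\{j(i)\})+f(A_{i-1}\cap\{j(i)\})\le f(A_{i-1})+f(\{j(i)\})$, and since $j(i)\notin A_{i-1}$ we have $A_{i-1}\cap\{j(i)\}=\emptyset$, so with $f(\emptyset)=0$ and $f(\{j(i)\})\le 1$ we conclude $f(A_i)\le f(A_{i-1})+1$. Therefore every coordinate of every vertex of $\mathcal{I}$ is $0$ or $1$, i.e. the vertices lie on $\{0,1\}^m$.

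For the ``in particular'' part, fix a vertex; the rate of user $i$ is then $R_i\in\{0,1\}$. For a binary-input user, $R_i=1$ is the largest possible single-user rate, attained precisely when that user sees a noiseless channel, while $R_i=0$ corresponds to the user transmitting nothing, i.e. effectively seeing a pure noise channel. So each user can meet its vertex rate by sending uncoded bits (on a noiseless channel) or frozen bits (on a pure noise channel), which is exactly the claim.

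I do not expect a real obstacle here: the statement is a direct consequence of Edmonds' theorem together with the defining properties of a matroid rank function. The only point that warrants a line of care is the observation that the Edmonds chains grow one element at a time, so that each vertex coordinate is a rank difference of two sets differing by a single element, and hence equals $0$ or $1$ by the submodularity argument above.
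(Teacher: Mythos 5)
Your proposal is correct and follows essentially the same route as the paper: Theorem~\ref{mat} identifies the limit as a matroid rank function, and Edmonds' vertex characterization reduces the claim to showing that single-element rank increments lie in $\{0,1\}$ (which the paper already has directly in Lemma~\ref{01}, and which you rederive cleanly from monotonicity, submodularity and $f(\{j\})\le 1$). Nothing is missing.
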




\subsection{Convergence Speed and Representation of Matroids}\label{speed}
{\it Convention:} for a given $m$, we write the collection
 $\{ I_\infty \J , J \subseteq \M\}$ by skipping the empty set (since $I_\infty [\emptyset] =0$) as follows: when $m=2$, we order the sequence as $(I_\infty [1], I_\infty [2], I_\infty [1,2])$, and when $m=3$, as $(I_\infty [1], I_\infty [2],$ $I_\infty [3],$ $I_\infty [1,2], I_\infty [1,3], I_\infty [2,3], I_\infty [1,2,3])$, etc.

In this section, we show that there is a correspondence between the extremal MACs and the linear deterministic MACs, i.e., MACs whose outputs are linear forms of the inputs. 
This correspondence has been used in \cite{2mac} to establish that the convergence to the extremal MACs for the 2-user case is fast, namely $o(2^{-n^{\beta}})$ for any $\beta< 1/2$, which allows to conclude that the block error probability of the code described in \cite{2mac} is small. We hence follow the same approach as in \cite{2mac} to treat the case where the number of users is arbitrary, and proceed here to establish this correspondence. We will see that while the case $m=3$ is similar to the case $m=2$, a new difficulty is encountered for $m \geq 4$. How to use this correspondence in order to show that the the convergence to the extremal MACs for the $m$-user case is fast is done in Section \ref{msol}.

Note that a property of the matroids $\{(0,0,0),(0,1,1)$, $(1,0,1)$,$(1,1,1),(1,1,2) \}$ is that we can express any of them as the uniform rate region of a linear deterministic MAC: $(1,0,1)$ is in particular the uniform rate region of the MAC whose output is $Y=X[1]$, $(0,1,1)$ corresponds to $Y=X[2]$, $(1,1,1)$ to $Y=X[1]+X[2]$ and $(1,1,2)$ to $Y=(X[1],X[2])$. Indeed, this is related to the fact that any matroid with a two element ground state can be represented in the binary field. Let us introduce the definition of binary matroids. 
\begin{definition}\label{bin}
{\it Linear matroids:} let $A$ be a $k \times m$ matrix over a field. Let $\M$ be the index set of the columns in $A$. The rank of $J \subseteq \M$ is defined by the rank of the sub-matrix with columns indexed by $J$. \\
{\it Binary matroids:} A matroid is binary if it is a linear matroid over the binary field. We denote by $\mathrm{BMAT}_m$ the set of binary matroids with $m$ elements.
\end{definition}

\subsubsection{The Case $m=3$}
$\mat_3$ is given by
8 unlabeled matroids (16 labeled ones). Moreover, they are all binary representable (there are 16 labeled binary matroids). 
For example, 
it is clear that the deterministic MAC whose output is $X[1]+X[2]+X[3]$ has a uniform rate region given by $(1,1,1,1,1,1,1)$. Similarly, all matroids for $m=3$ correspond to the rate region of a linear deterministic MAC. 
However, one can also show that any 3-user binary MAC with uniform rate region given by a matroid is equivalent to a linear deterministic MAC in the following sense. A MAC with output $Y$ and uniform rate region given by $(1,1,1,1,1,1,1)$ must satisfy $I(X[1]+X[2]+X[3]; Y)=1$, and similarly for other matroids (with $m=3$), where the linear forms of inputs which can be recovered from the output are dictated by the binary representation of the matroid.
However, the above claim is not quite sufficient to show that, if $\{I\J(P_n) , J \subset \M\}$ tends to $(1,1,1,1,1,1,1)$, we have along this path that $I((P^{[1,2,3]})_n)$ tends to 1, where $P^{[1,2,3]}$ is the channel with input $X[1]+X[2]+X[3]$ and output $Y$. For this, one can show a stronger version of the claim which says that if a MAC has a uniform rate region ``close to'' $(1,1,1,1,1,1,1)$, it must be that $I(X[1]+X[2]+X[3]; Y)$ is close to 1.
In any case, a similar technique as for the $m=2$ case lets one show that the convergence to the matroids in  $\A_3$ must take place fast enough. 

\subsubsection{The Case $m=4$}
We have that $\mat_4$ contains 17 unlabeled matroids (68 labeled ones). However, there are only 16 unlabeled binary matroids with ground state 4. Hence, there must be a matroid which does not have a binary representation. This matroid is given by $(1,1,1,1,2,2,2,2,2,2,2,2,2,2)$ (one can easily check that this is not a binary matroid). It is denoted $U_{2,4}$ and is called the uniform matroid of rank 2 with 4 elements (for which any 2 elements set is a basis). Of course, that this matroid is not binary, does not imply that an hypothetic convergence to it must be slow. It means that we will not be able to use the technique employed for the case $m=2,3$.

Luckily, one can show that there is no MAC leading to $U_{2,4}$ and the following holds. 
\begin{lemma}\label{spock}
$\A_4 \subset \mathrm{BMAT}_4 \subsetneq \mat_4.$
\end{lemma}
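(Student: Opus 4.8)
The plan is to establish the two relations in Lemma~\ref{spock} separately. The easy inclusion $\mathrm{BMAT}_4 \subsetneq \mat_4$ is already essentially justified in the text: every binary matroid is in particular a matroid, so $\mathrm{BMAT}_4 \subseteq \mat_4$, and the strictness follows from the observation that $U_{2,4}$, given by the rank vector $(1,1,1,1,2,2,2,2,2,2,2,2,2,2)$, is a matroid (one checks submodularity and $f(J)\le |J|$ directly) but is not the rank function of any binary matrix, since over $\F_2$ a rank-$2$ matrix has at most $3$ distinct nonzero columns, so one cannot have all $\binom{4}{2}=6$ pairs of columns independent. So the heart of the lemma is the inclusion $\A_4 \subseteq \mathrm{BMAT}_4$, and by Theorem~\ref{mat} we already know $\A_4 \subseteq \mat_4$, so what must be shown is that the single non-binary matroid $U_{2,4}$ does not occur as an element of $\A_4$, i.e., that with probability one $I_\infty \neq (1,1,1,1,2,2,\ldots,2)$.

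To prove $U_{2,4} \notin \A_4$, I would argue that no $4$-user BMAC $P$ can have a uniform rate region equal to $U_{2,4}$, and that moreover this exclusion is ``stable'': a MAC whose $I\J(P)$ are all within a small $\delta$ of the $U_{2,4}$ values cannot exist. The first step is the exact statement: suppose $I[i](P)=1$ for each singleton $i$ and $I[J](P)=2$ for every $J$ with $|J|\ge 2$. From $I[i](P) = I(X[i]; Y X[i^c]) = 1$ we get that $X[i]$ is a deterministic function of $(Y, X[i^c])$ for each $i$; equivalently, conditioned on $Y$, the vector $\bX$ lives in (a coset of) some affine subspace, and the singleton conditions say that revealing all other coordinates pins down the $i$-th. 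The condition $I[\{i,j\}](P)=2$ says $I(X[i]X[j]; Y X[\{i,j\}^c]) = 2$, i.e.\ $(X[i],X[j])$ is determined by $(Y, X[\{i,j\}^c])$. The point to extract is that $I[\{i,j\}](P)=2$ forces $H(X[i]X[j]\mid Y)=0$ is false in general — rather, one should track, for each value of $Y$, the dimension of the set of consistent $\bX$. I would show that the family of constraints ``every pair is recoverable from the complement, but no single coordinate is recoverable from $Y$ alone'' (the latter because $I[\{1,2,3,4\}](P)=2 < 4$ forces $H(\bX\mid Y)=2$, so individual coordinates retain positive conditional entropy) cannot be simultaneously realized over $\F_2$; concretely, conditioned on $Y=y$ the support of $\bX$ is a coset of a $2$-dimensional subspace $V_y\subseteq\F_2^4$, and the pairwise-recoverability condition demands that every $2$-subset of coordinates forms a transversal of $V_y$, which as above is impossible in dimension $2$. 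Hence no exact MAC realizes $U_{2,4}$.

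For the probability-zero conclusion I would then upgrade this to a robust statement, in the style of Lemma~\ref{mjg}: there is a $\delta>0$ such that no $4$-user BMAC has all its $I\J$ simultaneously within $\delta$ of the $U_{2,4}$ vector. This follows from a compactness/continuity argument — the map $P \mapsto (I\J(P))_{J\subseteq\M}$ is continuous, the set of channels is (after the usual reductions) effectively compact, and the exact analysis above shows the $U_{2,4}$ point is not in the closure of the image — or, preferably, from an explicit entropic inequality quantifying the obstruction above. Combined with Lemma~\ref{01} (which gives $I_\infty \in \mat_4$) and the fact that $\A_4$ is the support of $I_\infty$, so every element of $\A_4$ is a limit of realizable $I\J(P_n)$ values and hence within any $\delta$ of realizable values infinitely often, the robust exclusion rules out $U_{2,4}\in\A_4$. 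Since $U_{2,4}$ is the unique element of $\mat_4\setminus\mathrm{BMAT}_4$, we conclude $\A_4\subseteq\mathrm{BMAT}_4$, completing the proof.

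I expect the main obstacle to be the robust (quantitative) version: the exact combinatorial argument that $U_{2,4}$ is unrealizable is clean, but turning ``$I\J(P)$ close to the $U_{2,4}$ vector'' into a contradiction requires either a careful continuity argument over a suitable compactified channel space or an explicit Fano-type inequality, and getting the quantifiers in the right order (a single $\delta$ working for all MACs) is the delicate point.
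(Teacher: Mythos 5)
Your skeleton is the paper's: both parts of the lemma reduce to facts about $U_{2,4}$ ($\mathrm{BMAT}_4\subsetneq\mat_4$ because $U_{2,4}$ is not binary; $\A_4\subseteq\mathrm{BMAT}_4$ via Theorem~\ref{mat} plus the exclusion of $U_{2,4}$, the unique non-binary matroid on four elements), and the core step --- no $4$-user BMAC has uniform rate region $U_{2,4}$ (Lemma~\ref{u24}) --- is attacked in both cases by analyzing the posterior $\pp(\cdot\mid y)$ on $\F_2^4$. However, one step of your exact argument does not hold as written: you assert that, conditioned on $Y=y$, the support of $\bX$ is a coset of a $2$-dimensional subspace $V_y$, and then invoke the non-representability of $U_{2,4}$ over $\F_2$. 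Nothing in the hypotheses forces the conditional support to be affine. What \eqref{c1} and \eqref{c2} actually give is weaker: every pair-marginal of $\pp(\cdot\mid y)$ is uniform and every pair-projection is injective on the support, so the support is a $4$-point set (with uniform weights) all of whose six pair-projections are bijections onto $\F_2^2$ --- a priori not a coset. The paper closes this by directly enumerating the atoms of $\pp(\cdot\mid y_0)$ starting from $\pp(0,0,0,0\mid y_0)>0$ and deriving a contradiction with the uniformity of the $(X[1],X[2])$-marginal; equivalently, such a $4$-point set would amount to a pair of orthogonal Latin squares of order $2$, which do not exist. Substituting that enumeration for the coset claim repairs your argument; as stated, the coset step is a gap.

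On the second point you flag --- that the exact non-realizability must be upgraded to ``no \emph{sequence} of MACs has uniform rate region converging to $U_{2,4}$,'' since elements of $\A_4$ arise only as limits of realizable vectors $I\J(P_n)$ --- you are right that this is genuinely needed and is the delicate part; the paper itself dispatches it in one sentence, asserting that the same enumeration perturbs to a quantitative (Fano-type) version in the spirit of Lemma~\ref{mjg}. Your proposed compactness/continuity route can be made to work, but ``the set of channels is effectively compact'' needs care because the output alphabets of the $P_n$ grow without bound: one must quotient to the pushforward distribution of the posterior $\pp(\cdot\mid Y)\in\Delta(\F_2^4)$, where the admissible channels form a weakly compact set on which $P\mapsto(I\J(P))_J$ is continuous, so the image is closed and excludes the $U_{2,4}$ vector together with a neighborhood. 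Either route is legitimate; the quantitative perturbation of the enumeration is the one the paper intends and avoids the topological setup.
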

Hence, the $m=4$ case can be treated in a similar manner as the previous cases. 
We conclude this section by proving the following result, which implies Lemma \ref{spock}.
\begin{lemma}\label{u24}
$U_{2,4}$ cannot be the uniform rate region of any MAC with four users and binary inputs.
\end{lemma}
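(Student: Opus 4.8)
The plan is to suppose, for contradiction, that there exists a $4$-user BMAC $P$ with $I(P) = U_{2,4}$, i.e., writing $\M = \{1,2,3,4\}$ with uniform independent binary inputs $X[1],\dots,X[4]$ and output $Y$, one has $I(X[J];YX[J^c]) = \min(|J|,2)$ for every $J \subseteq \M$. I would first extract the structural consequences of these equalities. Since $I[\M](P)=2$, we have $H(\bar X \mid Y) = 4 - 2 = 2$, and since $I[\{i\}](P) = I(X[i];YX[\M\setminus i]) = 1$ for each singleton, each $X[i]$ is a deterministic function of $(Y,X[\M\setminus i])$; more precisely $H(X[i]\mid YX[\M\setminus i]) = 0$. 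Likewise, for every $2$-element set $J$, $I[J](P) = 2$ gives $H(X[J]\mid YX[J^c]) = 0$, so the pair $X[J]$ is determined by $(Y,X[J^c])$; and for every $3$-element set $J$, $I[J](P)=2$ gives $H(X[J]\mid YX[J^c]) = 1$.

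Next I would exploit these to pin down the conditional law of $\bar X$ given $Y$. Fix a value $y$ with positive probability and consider the conditional distribution $\mu_y$ of $\bar X$ on $\F_2^4$; it has entropy contributions that must average to $2$, but I would rather argue pointwise using the a.s. statements above. From $H(X[i]\mid YX[\M\setminus i])=0$ for all $i$: given $y$ and any three coordinates, the fourth is determined. This means the support of $\mu_y$ is an affine subspace (or coset structure) on which no coordinate is free given the other three — equivalently, the support, as a subset of $\F_2^4$, is a coset of a linear code $C_y \subseteq \F_2^4$ such that every coordinate projection restricted to $C_y$ is injective on... no: the condition "any three coordinates determine the fourth" says $C_y$ is contained in the parity-check-like set, forcing $C_y \subseteq \{x : \sum_{i} x_i = 0\}$ or a similar single linear relation, hence $\dim C_y \le 3$; combined with $H(\bar X\mid Y=y)$ and the $2$-set conditions $H(X[J]\mid YX[J^c])=0$ (given $y$, any two coordinates determine the other two) one forces $\dim C_y \le 2$, and averaging $H(\bar X\mid Y)=2$ forces $\dim C_y = 2$ a.s. and $\mu_y$ uniform on a coset of $C_y$. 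So each $C_y$ is a $2$-dimensional binary code with the property that every $3$-subset of coordinates has full rank $2$ when restricted (from the singleton conditions) — but this just says $C_y$ has no coordinate that is identically zero and no two coordinates that are equal or complementary... actually the right reading: "two coordinates determine the remaining two" means the two remaining coordinates are linear functions of the two chosen — automatic for a $2$-dim code — while "three determine the fourth" is automatic too. The real constraint comes from the $3$-set condition $H(X[J]\mid YX[J^c]) = 1$: given $y$ and the single coordinate in $J^c$, the triple $X[J]$ still has one bit of freedom, i.e., knowing one coordinate does not collapse $C_y$ — so no coordinate functional is the zero functional on $C_y$, equivalently every coordinate of $C_y$ is "active."

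Then I would derive the contradiction by a counting/consistency argument across the different values of $y$. The marginal of $\bar X$ is uniform on $\F_2^4$, so $\sum_y \pp(Y=y)\, \mu_y = \text{Unif}(\F_2^4)$, i.e., the cosets of the $C_y$'s (with weights $4\,\pp(Y=y)$) tile $\F_2^4$ as a mixture. Simultaneously, the condition that for each of the $\binom{4}{2}=6$ pairs $J$ we have $I[J](P)=2$ should translate, after the same conditioning analysis applied to the channel "input $X[J]$, side information $X[J^c]$, output $Y$", into: for a.e. $y$, the projection of $C_y$ onto coordinates $J$ is injective — but a $2$-dim code in $\F_2^4$ cannot have all six $2$-coordinate projections injective unless it is (equivalent to) the code $\{0000,1100,0011,1111\}$-type, which fails some projection, OR the $[4,2]$ code of $U_{2,4}$'s representation which does not exist over $\F_2$ — and this is exactly the non-representability of $U_{2,4}$. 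So the final step is: the $2$-set equalities force every $C_y$ to be a binary $[4,2]$ code all of whose $2$-element coordinate subsets are bases of the associated matroid, i.e., $C_y$ represents $U_{2,4}$ over $\F_2$; since $U_{2,4}$ is not binary, no such $C_y$ exists, contradiction.

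The main obstacle I anticipate is making the "for a.e. $y$, $C_y$ represents $U_{2,4}$" step fully rigorous: one must carefully argue that the \emph{global} mutual-information equality $I[J](P) = 2$ forces the \emph{pointwise-in-$y$} injectivity of coordinate projections of $C_y$, rather than merely on average, and handle the possibility that $C_y$ varies with $y$ and has dimension $<2$ on a null set. This is where I would invoke, for each $J$, the identity $2 = I[J](P) = \E_Y[\,|J| - H(X[J]\mid Y X[J^c])\,]$ together with $0 \le H(X[J]\mid YX[J^c]) \le$ (the deficiency of the projection of $C_y$), turning the average equality into an a.s. statement because each summand is nonnegative and bounded. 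Once that is in place, the matroid $\{J : C_y\text{-projection onto } J \text{ full rank}\}$ equals $U_{2,4}$ for a.e. $y$, and binary non-representability of $U_{2,4}$ closes the argument.
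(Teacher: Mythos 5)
Your overall strategy---pass to the posterior $\pp(\cdot\,|\,y)$, convert the mutual-information equalities $I[J](P)=\min(|J|,2)$ into pointwise (a.e.\ in $y$) zero-conditional-entropy statements, and then constrain the support of the posterior---is sound and is essentially the paper's strategy. But there is a genuine gap at the pivotal step: you assert that the support of $\mu_y$ is a coset of a linear code $C_y\subseteq\F_2^4$, and everything afterwards (the dimension count, ``$\dim C_y=2$ a.s.'', the conclusion that $C_y$ would be a binary representation of $U_{2,4}$) rests on that coset structure. The determinism constraints do not give you linearity. What $H(X[J]\mid Y X[J^c])=0$ for all $2$-sets $J$ actually gives (for a.e.\ $y$) is only that no two distinct points of $\supp(\mu_y)$ agree on any pair of coordinates, i.e.\ the support is a (not necessarily linear) code in $\F_2^4$ of minimum distance at least $3$; a set such as $\{0000,1110\}$ satisfies all your constraints and is not a coset of anything $2$-dimensional. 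Without the coset structure the phrase ``$C_y$ represents $U_{2,4}$ over $\F_2$'' has no referent, so the final appeal to binary non-representability does not close the argument as written. (The obstacle you flag yourself---average-to-pointwise---is actually the part that is fine.)

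The good news is that the constraints you correctly derived already finish the proof, more simply than the route you chose: since any two support points of $\mu_y$ must differ on every $2$-subset of coordinates, they differ in at least $3$ positions, and $\F_2^4$ contains at most two points that are pairwise at Hamming distance $\ge 3$; hence $|\supp(\mu_y)|\le 2$ and $H(\bar X\mid Y=y)\le 1$ for a.e.\ $y$, contradicting $H(\bar X\mid Y)=4-I[\M](P)=2$. This repaired argument is close in spirit to the paper's proof, which likewise fixes $y_0$, exhibits two admissible support points ($0000$ and $0111$, at distance $3$), shows that every point with $(x_1,x_2)=(1,0)$ is then excluded by the pair constraints, and contradicts the uniformity of the $(X_1,X_2)$ posterior marginal forced by $I(X[1,2];Y)=0$. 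So: right skeleton, one unjustified structural claim in the middle, and a shorter correct ending available from your own intermediate facts.
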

\begin{proof}
Assume that $U_{2,4}$ is the uniform rate region of a MAC. We then have
\begin{align}
& I(X[i,j]; Y) = 0 , \label{c1} \\ 
& I(X[i,j]; Y X[k,l]) = 2,  \label{c2}
\end{align}
for all $i,j,k,l$ distinct in $\{1,2,3,4\}$.


Let $y_0$ be in the support of $Y$. For $x\in \F_2^4$, define $\pp(x|y_0) = W(y_0|x)  / \sum_{z \in \F_2^4}W(y_0|z)$.
Assume w.l.o.g. that $p_0:=\pp(0,0,0,0|y_0)>0$. 
Then from \eqref{c2}, $\pp(0,0,*,*|y_0)=0$ for any choice of $*,*$ which is not $0,0$ and $\pp(0,1,*,*|y_0)=0$ for any choice of $*,*$ which is not $1,1$. On the other hand, from \eqref{c1},  $\pp(0,1,1,1|y_0)$ must be equal to $p_0$. However, we have form \eqref{c2} that $\pp(1,0,*,*|y_0)=0$ for any choice of $*,*$ (even for $1,1$ since we now have $\pp(0,1,1,1|y_0)>0$). At the same time, this implies that the average of $\pp(1,0,*,*|y_0)$ over $*,*$ is zero. This brings a contradiction, since from \eqref{c1}, this average must equal to $p_0$.
\end{proof}
Moreover, a similar argument can be used to prove a stronger version of Lemma \ref{u24} to show that no sequence of MACs can have a uniform rate region that converges to $U_{2,4}$.

\subsubsection{Arbitrary values of $m$}
We have seen in the previous section that for $m=2,3,4$, the extremal MACs have uniform rate region that are not any matroids but binary matroids. This fact can be used to show that for $m=2,3,4$, $\{ I\J(P_n) , J \subseteq \M\}$ must tend fast enough to $\{ I_\infty \J , J \subseteq \M\}$. The details of this proof are provided in Section \ref{msol}; in words, by working with the linear deterministic representation of the MACs, the problem of showing that the convergence speed is fast in the MAC setting becomes a consequence of a result shown in \cite{ari2} for the single-user setting.  We now show that the correspondence between extremal MACs and linear deterministic MACs holds for any value of $m$.
\begin{definition}
A matroid is BUMAC if its rank function can be expressed as $r (J) = I(X\J; Y X\C)$, $J \subseteq \M,$
where $X[E]$ has independent and binary uniformly distributed components, and $Y$ is the output of a binary input MAC with input $x[E]$. Note that the letters BU in BUMAC refer to the binary uniform (BU) inputs. 
\end{definition}

\begin{thm}\label{bumac}
A matroid is BUMAC if and only if it is binary. 
\end{thm}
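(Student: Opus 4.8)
The plan is to prove both directions of the equivalence. For the "only if" direction, suppose a matroid $\mathbb{B}$ is BUMAC, so its rank function is $r(J) = I(X\J; Y X\C)$ for some binary input MAC $P$ with output $Y$ and uniform independent inputs $X[E]$. I want to produce a $k \times m$ matrix over $\F_2$ whose column-submatrix ranks reproduce $r$. The key observation is that $r$ is integer valued with $r(J) \le |J|$, and moreover it satisfies a local increment dichotomy: for each $i \in J$, $r(J) - r(J\setminus i) \in \{0,1\}$ (this is the matroid property, and it is exactly the shape of the conclusion of Lemma \ref{01}/Lemma \ref{mjg}). The plan is to show that for each $i$, the linear form "$X\ii$ viewed through the output" behaves deterministically relative to the other coordinates when the increment is $1$, and is pure noise when the increment is $0$; more precisely I would argue that conditioned on $Y$ and the other inputs, $X\ii$ is either a fixed $\F_2$-linear function of the remaining inputs (increment $0$) or uniformly random (increment $1$). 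Carefully chaining these conditional statements along a full chain $\emptyset \subset \{j(1)\} \subset \dots \subset E$, one extracts a family of $\F_2$-linear functionals $\ell_1,\dots,\ell_t$ of $\bar{x}$ that are (a.s. in $y$) recoverable from the output, and whose span — as a subspace of the dual $(\F_2^m)^*$ — has the property that its "rank restricted to coordinate set $J$" equals $r(J)$. Taking $A$ to be the matrix whose rows are these functionals gives the binary representation.

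For the converse ("if"), suppose $\mathbb{B}$ is binary, represented by a $k\times m$ matrix $A$ over $\F_2$ whose column set is indexed by $E$. Define the linear deterministic MAC with output $Y := A\bar{X}^T \in \F_2^k$ on uniform independent inputs $\bar X$. I then need to check that $I(X\J; Y X\C) = r_{\mathbb{B}}(J)$ for all $J$. Writing $A = [A_J \mid A_{J^c}]$ in block form, $Y$ together with $X\C$ determines $A_J X\J^T$, so $I(X\J; Y X\C) = H(A_J X\J^T)$ since everything is a deterministic function of independent uniform bits; and because $X\J$ is uniform on $\F_2^{|J|}$, $A_J X\J^T$ is uniform on the column span of $A_J$, whose dimension is exactly $\rank(A_J) = r_{\mathbb{B}}(J)$. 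Hence $I(X\J;YX\C) = \rank(A_J) = r_{\mathbb{B}}(J)$, as required. This direction is essentially a bookkeeping exercise once the linear deterministic MAC is written down.

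I expect the main obstacle to be the forward direction: turning the information-theoretic statement $r(J)-r(J\setminus i)\in\{0,1\}$ into an honest \emph{algebraic} statement that $X\ii$ is a specific $\F_2$-linear combination of the other inputs given the output. One has to be careful that "recoverable from $(Y, X\C)$" for a \emph{uniform} binary random variable forces a linear (affine, but the constant must vanish by symmetry) dependence, not merely some measurable dependence — this uses that conditioning on the other inputs and integrating over $y$ must preserve uniformity in the increment-$1$ case and collapse to a point mass in the increment-$0$ case, and that these point masses, indexed by the other coordinates, must themselves assemble into an $\F_2$-linear map because the whole construction respects the group structure of $\F_2^m$ (translation of the inputs permutes the output distributions consistently). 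A clean way to organize this is to first prove the two-element special case — any matroid on a two-element ground set is binary, and any BUMAC on two elements is realized by one of $Y=X\1$, $Y=X[2]$, $Y=X\1+X[2]$, or $Y=(X\1,X[2])$ — which is already essentially asserted in the $m=2$ discussion above, and then bootstrap to general $m$ by applying this to each pair of coordinates after conditioning, assembling the resulting pairwise-linear-recoverability data into a single matrix via a maximal chain argument.
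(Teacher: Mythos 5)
Your converse direction is correct and is essentially what the paper has in mind when it calls that half ``easily proved'': for the linear deterministic MAC $Y=A\bar X$ one gets $I(X\J;YX\C)=H(Y|X\C)=H(A_JX\J)=\rank(A_J)=r_{\mathbb{B}}(J)$. The forward direction is where the theorem actually lives, and there your sketch has a genuine gap. The crucial step is to convert the \emph{information-theoretic} fact that all increments $r(J)-r(J\setminus i)=I(X\ii;YX\C)$ lie in $\{0,1\}$ into the \emph{algebraic} fact that what is recoverable from $Y$ is an $\F_2$-linear form of the inputs. You justify this by asserting that ``the whole construction respects the group structure of $\F_2^m$ (translation of the inputs permutes the output distributions consistently)'' --- but no such covariance is assumed of the MAC: a general $P(y|\bar x)$ has no translation symmetry, and the posterior $\pp(\cdot|y)$ is a priori an arbitrary distribution on $\F_2^m$. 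The paper extracts linearity not from symmetry but from a direct support analysis of the posterior under the integrality constraints: Lemma~\ref{m2} shows, for two users, that integrality of $I(X\1;YX[2])$, $I(X[2];YX\1)$ and $I(X\1 X[2];Y)$ forces $\pp(\cdot|y)$ to be uniform on a coset of a subspace of $\F_2^2$, whence exactly one of ``nothing'', $X\1$, $X[2]$, $X\1+X[2]$, or everything is recoverable. (The same kind of posterior-support bookkeeping is what excludes $U_{2,4}$ in Lemma~\ref{u24}; the alternative route the paper cites for the direct part is Tutte's excluded-minor theorem combined with that exclusion.)

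A second problem is the bootstrap: applying the two-element case ``to each pair of coordinates after conditioning'' does not reach all linear forms. Knowing which single coordinates and which pairwise sums $X\ii+X\jj$ are recoverable does not determine whether $X\1+X[2]+X[3]$ is: for the channel $Y=X\1+X[2]+X[3]$ every singleton and every pairwise sum has zero mutual information with $Y$, yet the triple sum is noiseless. The paper's induction in the Appendix avoids this by applying the two-user lemma to the pair $\bigl(\sum_{i\in T}X\ii,\;X[k+1]\bigr)$, i.e.\ treating an already-certified linear form of $k$ inputs as a single virtual binary user and adjoining one new coordinate at a time; that is what lets the argument climb to sums of arbitrary support, and it is genuinely more than pairwise data on the original coordinates. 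Finally, to conclude that the matroid is binary you need $\rank(A_J)=r(J)$ for \emph{every} $J$, not only $J=\M$ as in Theorem~\ref{with}; this follows by running the same determination on the channels $X\J\to(Y,X\C)$, but it needs to be stated.
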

The converse of this theorem is easily proved and the direct part, which can be found in \cite{ab}, uses the following theorem.
\begin{thm}[Tutte]
A matroid is binary if and only if it has no minor that is $U_{2,4}$.
\end{thm}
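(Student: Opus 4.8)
The statement asserts two implications; the plan is to treat them separately and to concentrate essentially all effort on the hard converse. For the forward implication, binary $\Rightarrow$ no $U_{2,4}$ minor, I would argue in one stroke. The class of binary matroids is closed under deletion and contraction: if $A$ is a matrix over $\F_2$ representing the matroid, deleting an element discards the corresponding column, while contracting a non-loop $e$ amounts to pivoting on a nonzero entry of column $e$ and then discarding that row and column, and both operations keep the entries in $\F_2$. Hence every minor of a binary matroid is binary. Since $U_{2,4}$ is \emph{not} binary --- a binary representation would require four pairwise linearly independent nonzero vectors in $\F_2^2$, yet $\F_2^2$ contains only three nonzero vectors, exactly the non-representability already recorded for $U_{2,4}$ in the $m=4$ discussion --- no binary matroid can have $U_{2,4}$ as a minor.

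For the converse, no $U_{2,4}$ minor $\Rightarrow$ binary, I would argue the contrapositive through a minimal counterexample: among all non-binary matroids choose one, $M$, that is minor-minimal with this property, so that $M\setminus e$ and $M/e$ are binary for every element $e$. The entire task then reduces to proving that any such $M$ is isomorphic to $U_{2,4}$; as $U_{2,4}$ is trivially a minor of itself, it would follow that every non-binary matroid contains a $U_{2,4}$ minor. To detect and measure non-binarity I would use the circuit--cocircuit parity criterion: a matroid is binary if and only if $|C\cap C^*|$ is even for every circuit $C$ and cocircuit $C^*$. The easy half of this criterion comes from the orthogonality over $\F_2$ of the cocircuit space (the row space of $A$) and the circuit space (its null space); the other half --- building an $\F_2$-representation out of the combinatorial parity data --- lies at the heart of Tutte's theorem, and I would isolate it as a standalone lemma (or cite it).

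With the parity criterion in hand, non-binarity of $M$ supplies a circuit $C$ and cocircuit $C^*$ with $|C\cap C^*|$ odd, hence $\geq 3$ by orthogonality. I would first strip $M$ using the reductions licensed by minor-minimality. The matroid $M$ is connected, since a direct sum of proper restrictions would be a direct sum of binary matroids and hence binary; and $M$ is both simple and cosimple, since a loop or coloop could be removed, and a parallel element $f$ (with $\{e,f\}$ a circuit) could be deleted and then reinserted as a duplicate column of a binary representation of $M\setminus f$, so neither can occur. It then remains to show that a connected, simple, cosimple matroid all of whose single-element deletions and contractions are binary must have rank $2$ and exactly four elements. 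Here one fixes an element $e$, compares the essentially unique $\F_2$-representations of $M\setminus e$ and $M/e$ (binary matroids are uniquely representable over $\F_2$), and measures the obstruction to splicing them into a representation of $M$; the failure of this splice is precisely what an odd circuit--cocircuit intersection records, and minimizing the configuration collapses it onto $U_{2,4}$.

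The genuine difficulty, and the step I expect to be load-bearing, is this last passage: converting an odd circuit--cocircuit intersection --- equivalently, the non-splicing of the two single-element representations --- into an honest $U_{2,4}$ minor. A naive attempt to delete and contract straight down to $C\cap C^*$ fails, because neither deletion nor contraction preserves a prescribed cocircuit or circuit in general; already in $U_{2,4}$ itself, deleting the fourth point destroys the status of a $3$-element set as a cocircuit, so the reduction cannot be performed blindly. Controlling exactly this phenomenon is the content of Tutte's homotopy theorem, which I anticipate invoking as the main engine; the cleaner modern alternative is the representation-theoretic splicing argument of Seymour and Oxley, which I would treat as the single black box of the proof.
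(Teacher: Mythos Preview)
The paper does not give a proof of this statement: it is quoted as Tutte's classical theorem and used as a black box (the sentence immediately preceding it reads ``the direct part, which can be found in \cite{ab}, uses the following theorem''). There is therefore no proof in the paper to compare your proposal against.

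For what it is worth, your sketch is a reasonable outline of the standard route to Tutte's theorem. The forward direction is clean and complete. For the converse you correctly identify the architecture --- minor-minimal non-binary matroid, connectedness, simplicity/cosimplicity, and the circuit--cocircuit parity characterisation of binary matroids --- and you are honest that the final collapse to $U_{2,4}$ is the load-bearing step, to be obtained either via Tutte's homotopy theorem or via the representation-splicing arguments of Seymour/Oxley. That is exactly where the real work lies, and you are right that invoking one of these as a cited lemma is the pragmatic choice; without it, the passage from ``there exists a circuit $C$ and cocircuit $C^*$ with $|C\cap C^*|\geq 3$'' to ``$M\cong U_{2,4}$'' is not yet a proof but only a plan. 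If you intend a self-contained argument, that black box must be opened; if you intend what the paper intends --- namely, to invoke Tutte --- then a single citation (e.g., to Oxley's \emph{Matroid Theory}) in place of this entire discussion would be in keeping with the paper's own treatment.
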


In the following theorem, we connect extremal MACs to linear deterministic MACs. 
\begin{thm}\label{with}
Let $X[E]$ have independent and binary uniformly distributed components.
Let $Y$ be the output of a MAC with input $X[E]$ and for which $f(J)=I(X\J; Y X\C)$ is integer valued, for any $J \subseteq \M$. Then, there exists a binary matrix $A$ such that
$$I(A X[E]; Y) = \mathrm{rank} A = f(\M).$$
\end{thm}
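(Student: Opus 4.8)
The plan is to use the matroid-theoretic results already in hand to locate the right matrix $A$, and then to verify that the mutual-information identity holds. By Theorem \ref{mat}-style reasoning, the hypothesis that $f(J) = I(X\J;Y X\C)$ is integer valued, together with the polymatroid property from the Hanly theorem (\cite{hanly}), guarantees that $f$ is a matroid rank function on $\M$. Hence $f$ defines a matroid $\mathbb{B}$, and by the definition of BUMAC and Theorem \ref{bumac} this matroid is binary. So there is a $k \times m$ matrix $A$ over $\F_2$ whose column-rank function equals $f$; in particular $\rank A = f(\M)$. I would take this $A$ (after deleting redundant rows so that $k = \rank A$) as the candidate matrix. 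The remaining task is to show $I(A X[E]; Y) = \rank A$.

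First I would establish the easy inequality $I(A X[E]; Y) \leq \rank A$: since $A X[E]$ is a vector of $k = \rank A$ binary coordinates, its entropy is at most $\rank A$, so the mutual information is at most $\rank A$, with equality iff $H(A X[E] \mid Y) = 0$ and $A X[E]$ is uniform on $\F_2^k$ (the latter holding because $A$ has full row rank and $X[E]$ is uniform). So the crux is to prove $H(A X[E] \mid Y) = 0$, i.e. that $A X[E]$ is a deterministic function of $Y$.

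For this I would use the binary representation more carefully. A basis $B \subseteq \M$ of $\mathbb{B}$ has $|B| = f(\M) = \rank A$ and the columns of $A$ indexed by $B$ span the column space; moreover $f(B) = |B|$ means $I(X[B]; Y X[B^c]) = |B|$, so conditioned on $X[B^c]$, the block $X[B]$ is recoverable from $Y$ — equivalently $H(X[B] \mid Y X[B^c]) = 0$. The idea is then that $A X[E]$ can be computed from the pair $(X[B], X[B^c])$ by linearity, and that one can successively eliminate the dependence on $X[B^c]$: for each $i \in B^c$, the closure/circuit structure of the binary matroid expresses column $i$ as an $\F_2$-linear combination of columns in $B$, and one shows that the corresponding linear form of $X[B]$ is itself recoverable from $Y$ alone by an argument in the spirit of Lemma \ref{mjg} / Lemma \ref{u24} — examining the conditional distribution $\pp(x \mid y_0)$ on a fixed output symbol and using the rate constraints $f(J) = |J|$ on the relevant sets $J$ to force the support of $\pp(\cdot\mid y_0)$ to lie in an affine subspace on which all the rows of $A$ are constant. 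Iterating over a generating set of the row space of $A$ then yields $H(A X[E] \mid Y) = 0$.

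The main obstacle I anticipate is precisely this last step: going from ``each basis block is conditionally recoverable'' to ``the linear forms given by the rows of $A$ are unconditionally recoverable from $Y$.'' One has to argue that the dependence on the non-basis coordinates $X[B^c]$ genuinely drops out, not merely that it can be conditioned away — this is where the binary (as opposed to arbitrary-matroid) hypothesis is essential, since $U_{2,4}$ is exactly the obstruction, and Lemma \ref{u24} shows the support-of-$\pp(\cdot\mid y_0)$ argument can fail for non-binary matroids. I would carry this out by fixing $y_0$ in the support of $Y$, letting $S_{y_0} = \{x : W(y_0 \mid x) > 0\}$, and showing that $S_{y_0}$ is a coset of the null space of $A$: the constraints $f(J) = |J|$ over all $J$ containing a basis pin down $S_{y_0}$ to have size $2^{m - \rank A}$ and to be closed under the affine structure dictated by the circuits of $\mathbb{B}$, so every row of $A$ is constant on $S_{y_0}$, giving $H(A X[E]\mid Y=y_0) = 0$ for every such $y_0$ and hence $I(A X[E]; Y) = \rank A$.
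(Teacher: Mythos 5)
There is a genuine gap, and it sits exactly where you predicted it would. Your opening move --- deduce from Theorem \ref{bumac} that the matroid $f$ is binary and take $A$ to be its representing matrix --- is problematic in two ways. First, within this paper Theorem \ref{bumac} is itself obtained as a corollary of Theorem \ref{with2} (the approximate form of the very statement you are proving), so invoking it here is circular unless you import the independent Tutte-excluded-minor proof from \cite{ab}. Second, and more importantly, even granting that the matroid is abstractly binary, the representing matrix $A$ is a purely combinatorial object: nothing yet connects its rows to linear forms of the inputs that the channel output actually determines. You acknowledge this and reduce everything to showing $H(AX[E]\mid Y)=0$, i.e., that for each output symbol $y_0$ the conditional support $S_{y_0}$ is a coset of $\ker A$ on which $\pp(\cdot\mid y_0)$ is uniform. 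But that claim is precisely the hard content of the theorem, and your sketch of it (``one shows that the corresponding linear form of $X[B]$ is itself recoverable from $Y$ alone by an argument in the spirit of Lemma \ref{mjg} / Lemma \ref{u24}'') is an assertion, not a proof. Note also that hypotheses of the form $f(J)=I(X\J;YX\C)$ with $f(J)<|J|$ are statements about averages over $y$; turning them into pointwise statements about each $S_{y_0}$ requires matching one-sided bounds that you have not supplied, and the constraint ``$f(J)=|J|$ for all $J$ containing a basis'' that you invoke is not among the hypotheses (such $J$ have $f(J)=f(\M)<|J|$ unless $J$ is itself a basis).

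The paper closes this gap with a different, self-contained device: a two-user lemma (Lemma \ref{m2}) that carries out the support analysis you want, but only for $m=2$ where it is tractable, followed by an induction in which $\sum_{i\in T}X[i]$ is treated as one binary super-user and $X[k+1]$ as a second, so that Lemma \ref{m2} yields $I(\sum_{i\in T}X[i]+X[k+1];\,YX[S])$ from already-determined quantities. This produces the value of $I(\sum_{i\in T}X[i];Y)\in\{0,1\}$ for every $T$, and the matrix $A$ is then read off from the set of $T$'s assigned the value $1$ --- binary representability is an \emph{output} of the argument rather than an input. To salvage your direct approach you would need to actually establish the coset/uniformity structure of $\pp(\cdot\mid y_0)$ for general $m$, which in effect means reconstructing this induction.
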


This theorem was originally proved using matroid theory notations and we refer to \cite{ab} for this proof and other investigations regarding the connection between matroid theory and extremal MACs. We provide an alternate proof of this theorem in the Appendix. One can also show a stronger version of this theorem for MACs having a uniform rate region which is close to a matroid, this is provided in the Theorem \ref{with2} below, whose proof is also given in the Appendix. Note that Theorem \ref{bumac} follows from Theorem \ref{with2}. 
\begin{thm}\label{with2}
Let $X[E]$ have independent and binary uniformly distributed components.
For any $\e>0$, there exists $\gamma(\e)$ with the following properties:
\begin{itemize}
\item[(i)] $\gamma(\e)\to0$ as $\e\to0$,
\item[(ii)]
Whenever $Y$ is the output of a MAC with input $X[E]$ and for which $f: 2^m \ni J \mapsto I(X\J; Y X\C)$ satisfies $\max_{J \in 2^m} d(f(J), \mZ) < \e$, there exists a binary matrix $A$ such that 
$$ | I(A X[E]; Y) - f(\M) | < \gamma (\e). $$
\end{itemize}
\end{thm}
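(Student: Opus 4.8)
The plan is to mimic the exact-statement proof of Theorem~\ref{with} but track quantitative slack everywhere, so that ``$f$ is $\e$-close to integer valued'' propagates to ``there is a matrix $A$ with $|I(AX[E];Y)-f(\M)|<\gamma(\e)$'' for some $\gamma$ vanishing with $\e$. First I would set up the rounding: let $\bar f(J)$ be the integer nearest to $f(J)$ for each $J\subseteq\M$; by hypothesis $|f(J)-\bar f(J)|<\e$ for all $J$. The delicate first step is to argue that, for $\e$ small enough, $\bar f$ is itself a matroid rank function. Submodularity, monotonicity and $f(\emptyset)=0$ hold for $f$ exactly (Theorem on polymatroids), and the integer constraint $f(J)\le|J|$ follows from $I(X\J;YX\C)\le|J|$; rounding an integer-approximating submodular/monotone function on a \emph{finite} lattice to the nearest integer preserves these properties once $\e<1/2$ minus a margin (because each defining inequality, being an inequality between sums of at most $2^m$ of the $f$-values, is violated by the rounded version by at most $2^m\cdot 2\e$, while both sides are integers, hence the inequality is exact for $\e$ small). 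So for $\e$ below an absolute threshold depending only on $m$, $\bar f\in\mathrm{MAT}_m$, and moreover $f(\M)$ lies within $\e$ of the integer $\bar f(\M)=:r$.

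Next I would apply Tutte's theorem / Theorem~\ref{bumac}: I need $\bar f$ to be \emph{binary}, not merely a matroid. Here is where I would invoke the quantitative analogue of Lemma~\ref{u24} — namely that no sequence of MACs can have uniform rate region converging to $U_{2,4}$, and more generally to any matroid with a $U_{2,4}$ minor — which the paper asserts right after Lemma~\ref{u24}. Concretely: if $\bar f$ had a $U_{2,4}$ minor, one could restrict/contract the coordinates $X[S]$ appropriately (conditioning on a subset of the inputs and deleting others, which only shifts the relevant mutual informations by integers and preserves the $\e$-approximation for the minor's rank function) to obtain a four-user MAC whose $f$-function is $\e$-close to $U_{2,4}$; the stronger-contradiction version of the Lemma~\ref{u24} argument, run with $\e$-slack in \eqref{c1}--\eqref{c2}, shows the conditional distributions $\pp(x|y_0)$ cannot simultaneously satisfy the near-deterministic constraints and the near-additivity constraints, giving a contradiction once $\e$ is small enough. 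Hence $\bar f\in\mathrm{BMAT}_m$, so by the (easy) converse direction there is a binary matrix $A$ with $\rank A=r$ and $r_{\bar f}(J)=\rank A|_J$.

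The third step is to transfer from $\bar f$ back to $Y$: I must produce a \emph{single} matrix $A$ (the one representing $\bar f$ works) for which $I(AX[E];Y)$ is close to $r=\bar f(\M)=f(\M)+O(\e)$. Since $A$ has rank $r$, $AX[E]$ is a uniform random vector in an $r$-dimensional space, so $I(AX[E];Y)\le H(AX[E])=r$; I need the matching lower bound $I(AX[E];Y)\ge r-\gamma(\e)$. For this I would pick a basis $B=\{j(1),\dots,j(r)\}$ of $\bar f$ (a maximal independent set), so that $\bar f(B)=|B|=r$ and $f(B)\ge r-\e$, i.e.\ $H(X[B]\mid YX[B^c])=H(X[B])-I(X[B];YX[B^c])\le \e$; combined with the fact (which I would extract from the binary-representation structure exactly as in the exact proof of Theorem~\ref{with}) that the row space of $A$ is spanned by linear forms each of which is, up to an element of the conditioning $\sigma$-algebra, a function of $X[B]$ recoverable from $Y$ with conditional entropy $O(\e)$, a union bound over the $\le r$ generators gives $H(AX[E]\mid Y)\le r\cdot O(\e)$, hence $I(AX[E];Y)\ge r - r\cdot O(\e)\ge f(\M)-\gamma(\e)$ with $\gamma(\e)=O_m(\e)$, as required; finally combine with $I(AX[E];Y)\le r\le f(\M)+\e$.

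The main obstacle I anticipate is Step 2 — upgrading ``matroid'' to ``binary matroid'' \emph{quantitatively}. The exact statement uses Tutte's excluded-minor theorem as a clean dichotomy, but with only approximate constraints I cannot conclude $\bar f$ has no $U_{2,4}$ minor purely combinatorially; I genuinely need the robust version of Lemma~\ref{u24}, and I need to make sure the minor operation (contraction = conditioning on some inputs, deletion = marginalizing out some inputs) maps an $\e$-approximate rank function to an $\e'$-approximate rank function of the minor with $\e'$ controlled by $\e$ and $m$ only (this is routine since each minor rank value is an integer combination of at most two values of $f$). A secondary, more technical obstacle is making the entropy-chaining in Step~3 genuinely rigorous: one must be careful that ``recoverable from $Y$ up to conditioning information'' is quantified by a genuine conditional-entropy bound and that summing $O(\e)$ errors over a basis does not secretly hide an $\e$-dependence in the number of terms — but since the number of terms is at most $m$, the resulting $\gamma(\e)$ is still $O_m(\e)\to0$. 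I would also remark that Theorem~\ref{bumac} is recovered by taking $\e\to0$ in the conclusion, consistent with the note in the excerpt.
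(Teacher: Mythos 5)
There is a genuine gap, and it sits exactly where you flag a ``secondary, more technical obstacle'': Step 3. Knowing that the rounded function $\bar f$ is a binary matroid represented by a matrix $A$ does not, by itself, tell you anything about $I(AX[E];Y)$. Your transfer argument starts from a basis $B$ with $f(B)\ge r-\e$, which gives $H(X[B]\mid Y\,X[B^c])\le\e$ and hence $H(a_jX[E]\mid Y\,X[B^c])\le\e$ for each row $a_j$ of $A$ --- but this is recoverability from $Y$ \emph{together with} $X[B^c]$, and the quantity you actually need, $H(a_jX[E]\mid Y)$, exceeds it by $I(a_jX[E];X[B^c]\mid Y)$, which can be of order $1$ (e.g.\ for $Y=X[1]+X[2]$, $X[1]$ is recoverable from $(Y,X[2])$ yet $I(X[1];Y)=0$; only the specific form $X[1]+X[2]$ is recoverable from $Y$ alone). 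Identifying \emph{which} linear forms are recoverable from $Y$ alone, with quantitative control, is the entire content of the theorem; your parenthetical ``which I would extract from the binary-representation structure exactly as in the exact proof of Theorem~\ref{with}'' is precisely the missing argument, not a routine extraction. The paper supplies it by rerunning the inductive proof of Theorem~\ref{with} --- the two-super-user chain-rule reduction --- with Lemma~\ref{m2e} (the $\e$-robust version of Lemma~\ref{m2}, a corollary of Lemma~33 of \cite{2mac}) in place of Lemma~\ref{m2}; that induction directly constructs the forms $\sum_{i\in T}X[i]$ with $I(\sum_{i\in T}X[i];Y)>1-\gamma$, assembles them into $A$, and yields the lower bound on $I(AX[E];Y)$ by subadditivity over the rows.

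Two further remarks. First, once you run that induction, your Steps 1--2 (rounding to a matroid, Tutte's excluded-minor theorem, and the robust $U_{2,4}$ exclusion via minors of MACs) become redundant: the matrix $A$ and its rank come out of the induction itself, with no need to certify binarity combinatorially beforehand. Those steps are a legitimate alternative architecture --- essentially the route of \cite{ab} for Theorem~\ref{bumac} --- and your observation that contraction/deletion of the matroid correspond to marginalizing/revealing user inputs (each minor rank being a difference of at most two values of $f$, hence $2\e$-approximate) is correct; but they cannot substitute for Step 3. Second, the claim $\gamma(\e)=O_m(\e)$ is an over-claim: iterating Lemma~\ref{m2e} a bounded number of times yields only a composition $\gamma\circ\gamma\circ\cdots$ that tends to $0$ with $\e$, not linear decay. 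This is harmless since the statement only requires $\gamma(\e)\to0$, but the final bound should not be advertised as linear in $\e$.
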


Theorem \ref{bumac} says that an extremal MAC must have (with probability one) the same uniform rate region as the one of a linear deterministic MAC, i.e., a MAC whose output is a collection of linear forms of the inputs. However, Theorem \ref{with2}, says something stronger, namely, that from the output of an extremal MAC, one can recover a collection of linear forms of the inputs and essentially nothing else. In that sense, extremal MACs are equivalent to linear deterministic MACs. This also suggests that we could have started from the beginning by working with the quantities $I(P^{[J]}):=I(\sum_{i \in J} X_i;Y)$ instead of $I[J](P)=I(X\J; Y X\C)$ to analyze the polarization of a MAC. The second measure is the natural one to study a MAC, since it characterizes the rate region. However, we have just shown that it is sufficient to work with the first measure to characterize the uniform rate regions of the polarized MACs. Indeed, one can show that $I((P^{[J]})_n)$ tends either to 0 or 1 and Eren \c{S}a\c{s}o\u{g}lu \cite{eren} has provided a direct argument showing that these measures fully characterize the uniform rate region of the extremal MACs. We use a similar argument for the proof of Theorem \ref{with} given in the Appendix. 

\subsection{Comment: Relationship between information and matroid theories}
The process of identifying which matroids can have a rank function derived from an information theoretic measure, such as the entropy, has been investigated in different works, cf. \cite{zh} and references therein. 
In particular, the problem of characterizing the entropic matroids has consequent applications in network information theory and network coding problems as described in \cite{has}.

Entropic matroids are defined as follows. Let $E$ be a finite set and $X[E]=\{X_i\}_{i \in E}$ be a random vector with each component valued in a finite alphabet. Let $h(I) := h(X[I])$. 
\begin{thm}
$h(\cdot)$ is a rank function. 
Hence, $(E,h)$ is a polymatroid.
\end{thm}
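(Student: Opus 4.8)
The plan is to verify directly the three defining properties of a polymatroid rank function for the map $h\colon 2^E \ni I \mapsto h(X[I])$, using only elementary identities of Shannon entropy: the chain rule $h(X[K]) = h(X[J]) + h(X[K\setminus J]\mid X[J])$ for $J\subseteq K$, the nonnegativity of conditional entropy for discrete random variables, and the fact that conditioning cannot increase entropy.

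\emph{Normalization and monotonicity.} First, $h(\emptyset) = h(X[\emptyset]) = 0$, since the empty family of random variables (equivalently, a constant) carries zero entropy. Next, for $J\subseteq K\subseteq E$ the chain rule gives $h(K) - h(J) = h(X[K\setminus J]\mid X[J]) \ge 0$, because each $X_i$ takes values in a finite alphabet; hence $h(J)\le h(K)$.

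\emph{Submodularity.} Fix $J,K\subseteq E$ and note $J\setminus K = J\setminus (J\cap K)$. Applying the chain rule twice,
\begin{align*}
h(J\cup K) - h(K) &= h(X[J\setminus K]\mid X[K]),\\
h(J) - h(J\cap K) &= h(X[J\setminus K]\mid X[J\cap K]).
\end{align*}
Since $J\cap K\subseteq K$, conditioning on the larger family $X[K]$ cannot increase entropy, so the quantity on the first line is at most the quantity on the second; rearranging gives $h(J\cup K) + h(J\cap K) \le h(J) + h(K)$. With all three axioms verified, $h$ is a rank function and $(E,h)$ is a polymatroid.

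The only point requiring any care is the validity of these manipulations, which is guaranteed by the standing assumption that each $X_i$ is valued in a finite alphabet, so that all entropies are finite and both the chain rule and ``conditioning reduces entropy'' apply; there is thus no genuine obstacle. The statement is simply the entropic counterpart of the cited polymatroid theorem for the MAC mutual informations $\rho(S)=I(X[S];YX[S^c])$, and is established by the same bookkeeping.
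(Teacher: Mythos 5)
Your proof is correct: the three polymatroid axioms follow exactly as you argue from $h(\emptyset)=0$, nonnegativity of conditional entropy, and the fact that conditioning cannot increase entropy (which is precisely submodularity after the two chain-rule decompositions you write). The paper does not prove this theorem itself but defers to Fujishige and Lov\'asz, and your argument is the standard one found there, so there is nothing to reconcile.
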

A (poly)matroid is then called entropic, if its rank function can expressed as the entropy of a certain random vector, as above.  
A proof of the previous theorem is available in \cite{fu,lov}.
The work of Han, Fujishige, Zhang and Yeung, \cite{han,fu,zh} has resulted in the complete characterization of entropic matroids for $|E| = 2 ,3$.
However, the problem is open when $|E| \geq 4$. Note that in our case, where we have been interested in characterizing BUMAC matroids instead of entropic matroids, we have also faced a different phenomenon when $|E| \geq 4$.
Other similar problems have been studied in \cite{matus}.



\section{Main result: polar codes for MACs}\label{msol}
In this section, we describe our polar code construction for the MAC and prove the main theorem of the paper. 


Let $n=2^l$ for some $l \in \mZ_+$ and  
let $G_n=    \bigl[\begin{smallmatrix} 
      1 & 0 \\
      1 & 1 \\
   \end{smallmatrix}\bigr]^{\otimes l}$ denote the $l$-th Kronecker power of the given matrix. Let $U[k]^n:=(U_1[k],\ldots,U_n[k])$ and
\begin{align*}
& X[k]^n = U[k]^n G_n, \quad k \in \M.
\end{align*}

When $X[\M]^n$ is transmitted over $n$ independent uses of $P$ to receive $Y^n$, define for any $i \in \{1,\ldots,n\}$ the channel 
\begin{align}
P_{(i)}: \F_2^m \rightarrow \Y^n \times \F_2^{m(i-1)} \label{pi}
\end{align}
to be the channel whose inputs and outputs are $U_i[\M] \rightarrow Y^n U^{i-1}[\M]$.

\def\mB{\mathbb{B}}

Let $n \geq 1$ and $\e_n>0$, classify each $P_{(i)}$ as either `polarized' or `not polarized' according to the function $I(P_{(i)})$ being valued within $\e_n$ of $\mZ$ or not.
(We will choose an appropriate sequence $\{\e_n\}$ below.  For the moment note only that by Theorem \ref{mat}, if $\e_n$ were any fixed constant, the channels $P_{(i)}$ are in the `polarized' category except for a vanishing fraction of indices $i$.)  For $i$ for which $P_{(i)}$ is in the polarized category, set $r_i$ to be the integer within $\e_n$ of $I(P_{(i)})[\M]$.  Theorem \ref{with2} let us conclude the existence of a $r_i\times m$ matrix $A_i$ for which $H\bigl(A_iU_i[\M] \bigm| Y^n U^{i-1}[\M]\bigr)<\gamma(\e_n)$, that is to say the output of channel $P_{(i)}$ determines $A_i U_i[\M]$ with high probability\footnote{Indeed, by Problem 4.7 in \cite{gallager}, with probability at least $1-\gamma(\e_n)$.}.

We now describe what we refer to as the polar encoder and decoder for the MAC.  The encoder will be specified via the sets $B_i\subset \M$, the set of users sending data on $P_{(i)}$.  These will be chosen as follows: If $P_{(i)}$ is not polarized $B_i$ is empty.  Otherwise, select $r_i$ linearly independent columns of the matrix $A_i$, and put $k$ in $B_i$ if and only if the $k$'th column is selected.  For a user $k$ let $\mathcal G[k]$ be the set of $i$ for which $k\in B_i$.  For each user $k$ and $i\not\in \mathcal G[k]$ choose $U_i[k]$ independently and uniformly at random, reveal all these `frozen' choices to user $k$ and also to the decoder.  The encoder for user $k$ will transmit uncoded bits on channels included in $\mathcal G[k]$, on the other channels it will transmit the frozen values.

The decoder operates by successively decoding $U_1[\M]$, $U_2[\M]$, \dots, $U_n[\M]$.  At stage $i$, having
already decoded $U^{i-1}[\M]$ (assume correctly, for the moment), it is in possession of $(Y^n,U^{i-1}[\M])$, the output of $P_{(i)}$.  It can thus determine $A_i U_i[\M]$ with high probability.  Since it knows $U_i[B_i^c]$, it can
determine $\sum_{k\in B_i} A_i[k]U_i[k]$, and as $\{A_i[k]:k\in B_i\}$ are linearly independent, it can determine $U_i[B_i]$.

Observe that for this decoder to operate as described above, it needs the aid of a genie which
provides it with $U^{i-1}[\M]$ at stage $i$ of the decoding.  Let $\hat U_i[\M]=\phi_i(Y^n,U^{i-1}[\M])$ denote the decoding function of such a decoder.  Observe now, that if we construct an unaided decoder via $\tilde U_i[\M]=\phi_i(Y^n,\tilde U^{i-1}[\M])$ using the same decoding function of the genie-aided decoder, the block error event for this unaided decoder
$\tilde U^n[\M]\neq U^n[\M]$ is the same as the block error event $\hat U^n[\M]\neq U^n[\M]$ of the genie aided decoder.  Thus, the block error probability of the unaided decoder $P_e(n)$ is equal to the block error probability of the genie aided decoder and so can be upper bounded as
$$
P_e(n) \leq \sum_i P_e(P_{(i)}, A_i U_i)
$$
where $P_e(P_{(i)}, A_i U_i)$ is the probability of error in determining $A_iU_i$ from the output of the channel $P_{(i)}$.  Note now, that we have to be careful in our choice of $\e_n$: we need to take $\e_n$ small enough to ensure that $nP_e(P_{(i)}, A_i U_i)$ is small.  We will see in the proof of Theorem 9 that channel polarization happens so rapidly that even with such a more stringent choice of $\e_n$ the fraction of non polarized channels vanishes with increasing $n$.   (Indeed, for any $\beta<1/2$, one can choose $\e_n=2^{-n^\beta}$ and still ensure polarization.)

Since $I[\M] (P)$ is preserved through the polarization process (cf. the equality in \eqref{sup}), we guarantee that with $ \delta_n$ denoting the fraction of unpolarized channels,
$$ \text{Sum-Rate} (n):= \frac{1}{n} \sum_{k=1}^m |\mathcal{G}[k]| > I[\M] (P) - \delta_n-\e_n,$$
Thus if $\e_n\to0$ is chosen so that $\delta_n\to0$, the communication system described above achieves the uniform sum rate of the underlying channel.  The question as to if $\e_n$ can be chosen so that both $\delta_n\to0$ and the block error probability decays to zero is answered in the affirmative by the theorem below.


\begin{thm}\label{main}
For any $m\geq 1$, any binary input MAC $P$ with $m$ users, and any $\beta < 1/2$, there exists an integer $n_0$ and a sequence of codes with polar encoders and decoders described above such that the probability of error for a block length $n$ satisfies
$$P_e (n) \leq 2^{-n^\beta}, \quad \forall n \geq n_0$$
and $$\lim \inf _{n \rightarrow \infty} \text{Sum-Rate} (n) \geq I[\M] (P).$$ 
\end{thm}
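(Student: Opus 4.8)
The plan is to reduce the two things that need proving --- (A) that the fraction $\delta_n$ of synthetic channels $P_{(i)}$ that are not polarized (i.e.\ whose rank function $I(P_{(i)})$ is not within $\e_n$ of $\mZ^{2^m}$) vanishes even for a rapidly shrinking $\e_n$, and (B) that $\sum_i P_e(P_{(i)},A_iU_i)$ decays like $2^{-n^\beta}$ --- to the single-user rate-of-polarization theorem of \cite{ari2}, by way of the linear-deterministic representation of (nearly) extremal MACs obtained in Section~\ref{speed}. First I would fix $\beta<1/2$, pick an auxiliary $\beta'\in(\beta,1/2)$, set the classification threshold $\e_n:=2^{-n^{\beta'}}$ (throughout $n=2^l$, $l\to\infty$), and recall the genie bound $P_e(n)\le\sum_i P_e(P_{(i)},A_iU_i)$ already established above; the slack between $\beta$ and $\beta'$ will absorb the factor $mn$ that the union bounds below produce.

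The key point is that polarizing $P$ coordinate-wise automatically carries out Ar{\i}kan's single-user polarization of every channel $P^{[J]}$ with input $\sum_{k\in J}X[k]$ and output $Y$ ($\emptyset\neq J\subseteq\M$): since $X[k]^n=U[k]^nG_n$ for each user $k$, the $n$ coordinates $\sum_{k\in J}U_1[k],\dots,\sum_{k\in J}U_n[k]$ are i.i.d.\ uniform and satisfy $(\sum_{k\in J}X_1[k],\dots,\sum_{k\in J}X_n[k])=(\sum_{k\in J}U_1[k],\dots,\sum_{k\in J}U_n[k])\,G_n$, so the $i$-th synthetic single-user channel $(P^{[J]})_{(i)}$ has input $\sum_{k\in J}U_i[k]$ and output $(Y^n,(\sum_{k\in J}U_1[k],\dots,\sum_{k\in J}U_{i-1}[k]))$, which is a deterministic function of the output $(Y^n,U^{i-1}[\M])$ of $P_{(i)}$ in \eqref{pi}. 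By \cite{ari2}, for each of these finitely many channels and any $\beta'<1/2$, the fraction of indices $i$ with Bhattacharyya parameter $Z\big((P^{[J]})_{(i)}\big)\le2^{-n^{\beta'}}$ converges to $I(P^{[J]})$, while the remaining indices contribute channels that polarize to the useless one. I would then invoke the characterization of extremal MACs from Section~\ref{speed} --- $I(P_n)$ converges to a \emph{binary} matroid rank function (Theorems~\ref{mat},~\ref{bumac}), such matroids being exactly those representable over $\F_2$, and (nearly) extremal MACs being equivalent to linear deterministic ones (Theorems~\ref{with},~\ref{with2}) --- in its quantitative, finite-$n$ form: for all but a fraction $\delta_n\to0$ of $i$, writing $L_i:=\{J:Z((P^{[J]})_{(i)})<\e_n\}$ (viewed inside $\F_2^m$ via indicator vectors, together with $0$), $L_i$ is a linear subspace, $I(P_{(i)})$ lies within $\gamma(\e_n)$ of the rank function of the binary matroid represented by any matrix $A_i$ whose rows form a basis of $L_i$, and $r_i:=\rank A_i$ is within $\gamma(\e_n)$ of $I(P_{(i)})[\M]$. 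Such $i$ are precisely the polarized ones, which settles (A) and makes the choice of $r_i$, $A_i$, $B_i$ in the encoder/decoder legitimate.

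To establish (B), note that on a polarized index $i$ the decoder, in possession of $U^{i-1}[\M]$, recovers $A_iU_i[\M]$ one row at a time: for a row corresponding to $J\in L_i$ it forms $(\sum_{k\in J}U_1[k],\dots,\sum_{k\in J}U_{i-1}[k])$ and runs the single-user successive-cancellation decoder of $(P^{[J]})_{(i)}$ on $(Y^n,(\sum_{k\in J}U_1[k],\dots,\sum_{k\in J}U_{i-1}[k]))$, incurring error at most $Z\big((P^{[J]})_{(i)}\big)<\e_n$; a union bound over the $r_i\le m$ rows gives $P_e(P_{(i)},A_iU_i)<m\,\e_n$, while unpolarized indices have $B_i=\emptyset$ and contribute nothing. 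Hence $P_e(n)<mn\,2^{-n^{\beta'}}\le2^{-n^\beta}$ for all $n\ge n_0$, since $n^{\beta'}-n^\beta$ eventually dominates $\log_2(mn)$. Finally, the equality in \eqref{sup} gives the exact conservation $\sum_{i=1}^nI(P_{(i)})[\M]=n\,I[\M](P)$, and since $r_i\ge I(P_{(i)})[\M]-\gamma(\e_n)$ on polarized indices while $I(P_{(i)})[\M]\le m$ always, $\text{Sum-Rate}(n)=\tfrac1n\sum_i r_i\ge I[\M](P)-m\,\delta_n-\gamma(\e_n)$, which tends to $I[\M](P)$ and gives the claimed $\liminf$.

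The step I expect to be the main obstacle is the quantitative structural transfer invoked in the second paragraph: passing from ``each single-user channel $P^{[J]}$ polarizes fast'' to ``$I(P_{(i)})$ is, up to $\gamma(\e_n)$, the rank function of a binary matroid whose recoverable linear forms are exactly $L_i$.'' For $m=2,3$ this mirrors \cite{2mac}, but for $m\ge4$ one must first rule out non-binary limiting matroids such as $U_{2,4}$ (Lemmas~\ref{u24},~\ref{spock}) and then turn the structural equivalence of Theorems~\ref{bumac}--\ref{with2} into a statement with explicit $\e$-dependence; this is exactly where the analysis genuinely differs from the two-user case. Everything else --- the genie reduction, the union bounds, and the conservation of $I[\M](P)$ --- is the standard bookkeeping of polar coding.
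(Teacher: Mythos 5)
Your proposal follows essentially the same route as the paper: the genie-aided reduction, the observation that user-wise polarization simultaneously polarizes every single-user channel $P^{[J]}$ (via $(P^{[J]})^-\equiv(P^-)^{[J]}$ and $(P^{[J]})^+\preceq(P^+)^{[J]}$, so the rate-of-polarization theorem of \cite{ari2} applies), and Theorems \ref{with}--\ref{with2} to identify the matrices $A_i$, followed by the same sum-rate conservation argument. The one place your write-up needs the paper's bookkeeping is the threshold accounting: Theorem \ref{with2} is invoked only at a \emph{fixed} accuracy $\e$ (yielding a fixed $\gamma(\e)$, not a ``$\gamma(\e_n)$'' with $\e_n=2^{-n^{\beta'}}$, for which neither the decay rate of $\gamma$ nor the fraction of indices classified as polarized is controlled), and the upgrade to the exponentially small threshold is done separately, per linear form $S$, by Lemma \ref{speeds} comparing $\widetilde{D}_n[S]$ with $D_n[S]$ --- exactly the ingredient you already cite in your second paragraph, so this is a matter of assembly rather than a missing idea.
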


As for the polar code in the single-user setting \cite{ari}, the encoding and decoding complexity of this code is $O(n \log n)$.

\begin{proof}[Proof of Theorem \ref{main}]
Fix $\alpha\in(\beta,1/2)$, $\e\in(0,1/2)$ and $\e_n=2^{-n^\alpha}$.  Let $\textrm{int}(x)$ denote the closest integer to $x$ and define
\begin{align*}
\mathcal{D}_n := \{ i \in &\{1,\dots,n\} :  \text{$I(P_{(i)})[J] \in \mZ \pm \e$ for any $J$}, \\
 & \exists \text{$A_i\in \F_2^{r_i\times m}$ with $r_i=\textrm{int}(I(P_{(i)}[\M]))$ and}\\
 & I(A_iU_i[\M];Y^n U[\M]^{i-1})> r_i-\e_n\}.
\end{align*}
For $i \in \mathcal{D}_n$, we have $H\bigl(A_iU_i[\M] \bigm| Y^n U^{i-1}[\M]\bigr)<\e_n$, and the output of channel $P_{(i)}$ determines $A_i U_i[\M]$ with high probability, namely
\begin{align}
P_e(P_{(i)}, A_i U_i) \leq\e_n.
\end{align}
Therefore,
\begin{align}
P_e(n) &\leq \sum_{i \in \mathcal{D}_n} P_e(P_{(i)}, A_i U_i) \\
& \leq n \e_n=o(2^{-n^\beta}).
\end{align}
Hence, such a choice of $\e_n$ guarantees the first claim of the Theorem. We now show that such an $\e_n$ is still large enough to maintain most of the polarized MACs active, causing no loss in the sum-rate as stated in the second claim of the Theorem. To this end, we need the following definition and result.
\begin{definition}\label{[s]}
For a $m$-user BMAC $P$ with output alphabet $\Y$ and for $S \subseteq \M$, we define $P^{[S]}$ to be the single-user binary input channel with output alphabet $\Y$, obtained from $P$ by
\begin{align*}
P^{[S]} (y| s) = \frac{1}{2^{m-1}} \sum_{x[\M] \in \F_2^m: \sum_{i \in S} x_i = s  } P (y| x[\M])
\end{align*}
for all $y \in \Y$, $s \in \F_2$.
Schematically, if $P: X[\M] \rightarrow Y$, we have $P^{[S]}: \sum_{i \in S} X_i \rightarrow Y$.
\end{definition}
\begin{lemma}\label{speeds}
Let $P_{(i)}$ be the channel defined in \eqref{pi} and let $(P_{(i)})^{[S]}$ be the corresponding single-user channel (cf. Definition \ref{[s]}). We have for any $\e>0$, $\alpha<1/2$ and $S \subseteq E_m$
\begin{align*}
&\lim_{l \rightarrow \infty} \frac{1}{n} |\{ i \in \{1,\ldots, n\} : I((P_{(i)})^{[S]}) > 1-\e ,\notag \\ 
&\qquad I((P_{(i)})^{[S]}) < 1-2^{-n^{ \alpha}} \}| = 0. 
\end{align*}
\end{lemma}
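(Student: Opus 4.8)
The plan is to deduce this from Ar{\i}kan and Telatar's single‑user rate‑of‑polarization theorem \cite{ari2}, applied not to the channel $P^{[S]}$ itself but to a dominating auxiliary process. The starting point is the interplay between the operation $P\mapsto P^{[S]}$ of Definition \ref{[s]} and the polarizing transforms $P\mapsto P^\pm$. A short computation from the definitions shows that, for every $m$‑user BMAC $R$, one has $(R^-)^{[S]}=(R^{[S]})^-$ exactly, whereas $(R^+)^{[S]}$ is a stochastically upgraded version of $(R^{[S]})^+$: applying the deterministic post‑processing $(y_1,y_2,\bar u_1)\mapsto\bigl(y_1,y_2,\sum_{k\in S}u_1[k]\bigr)$ to the output of $(R^+)^{[S]}$ recovers $(R^{[S]})^+$ verbatim. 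Iterating along the polarization tree, the single‑user channel $(P_n)^{[S]}$ (with $P_n$ the random process of Definition \ref{rp}) is obtained from $(P_{n-1})^{[S]}$ by the single‑user minus‑transform when $B_n=-$, and by an \emph{upgrade} of the single‑user plus‑transform when $B_n=+$.

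I would then track the Bhattacharyya parameter $Z_n:=Z\bigl((P_n)^{[S]}\bigr)$. From $Z(W^-)\le 2Z(W)-Z(W)^2$, $Z(W^+)=Z(W)^2$, and the monotonicity of $Z$ under degradation, it follows that $Z_{n+1}\le 2Z_n$ always and $Z_{n+1}\le Z_n^2$ on $\{B_{n+1}=+\}$ --- exactly the hypotheses under which \cite{ari2} operates. I also need $I\bigl((P_n)^{[S]}\bigr)$ to polarize: the single‑user conservation law $I(W^-)+I(W^+)=2I(W)$ with $W=(P_n)^{[S]}$, together with $I\bigl((P_n^+)^{[S]}\bigr)\ge I\bigl(((P_n)^{[S]})^+\bigr)$, makes it a bounded submartingale, so it converges a.s.\ to some $I_\infty^{[S]}$. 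Then $|I((P_{n+1})^{[S]})-I((P_n)^{[S]})|\to0$ a.s., and since on $\{B_{n+1}=+\}$ this increment is at least $I\bigl(((P_n)^{[S]})^+\bigr)-I\bigl((P_n)^{[S]}\bigr)\ge0$, Lemma \ref{bjg} applied with $Q=(P_n)^{[S]}$ forces $I_\infty^{[S]}\in\{0,1\}$ a.s. The elementary bounds $Z(W)\le\sqrt{1-I(W)^2}$ and $1-I(W)\le(\log_2 e)\,Z(W)$ then give $\{Z_n\to0\}=\{I_\infty^{[S]}=1\}$ a.s.; call its probability $q$.

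Now \cite{ari2} gives $\pp[Z_n\le 2^{-2^{n\beta}}]\to q$ for every $\beta<1/2$. Since, for $n=2^l$, the channels $\{(P_{(i)})^{[S]}:1\le i\le n\}$ are (up to reindexing) the $2^l$ channels obtained by the $l$ compositions of $\pm$ transforms followed by $(\cdot)^{[S]}$, each once, their empirical distribution equals the law of $(P_l)^{[S]}$; hence $\frac1n\bigl|\{i:Z((P_{(i)})^{[S]})\le 2^{-n^\beta}\}\bigr|\to q$, while the a.s.\ convergence $I((P_l)^{[S]})\to I_\infty^{[S]}\in\{0,1\}$ gives $\frac1n\bigl|\{i:I((P_{(i)})^{[S]})>1-\e\}\bigr|\to q$ for $\e<1/2$. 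To finish, fix $\beta\in(\alpha,1/2)$; for $n$ large one has $\tfrac32\,2^{-n^\beta}<2^{-n^\alpha}<\e$, so any $i$ with $Z((P_{(i)})^{[S]})\le 2^{-n^\beta}$ satisfies $I((P_{(i)})^{[S]})\ge 1-2^{-n^\alpha}$, hence also $I((P_{(i)})^{[S]})>1-\e$. Consequently
\begin{align*}
&\bigl|\{i:1-\e<I((P_{(i)})^{[S]})<1-2^{-n^\alpha}\}\bigr|\\
&\quad\le\ \bigl|\{i:I((P_{(i)})^{[S]})>1-\e\}\bigr|-\bigl|\{i:Z((P_{(i)})^{[S]})\le 2^{-n^\beta}\}\bigr|,
\end{align*}
and dividing by $n$ and letting $l\to\infty$ yields $q-q=0$, as required.

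The crux --- and the reason one cannot simply invoke the single‑user theorem for $P^{[S]}$ --- is that $(P^+)^{[S]}$ is a \emph{strict} upgrade of $(P^{[S]})^+$ for some $P$ (e.g.\ the four‑input MAC $Y=X[1]\wedge X[2]$): the channels $(P_{(i)})^{[S]}$ are not the Ar{\i}kan channels of $P^{[S]}$, and the fraction of indices with $I((P_{(i)})^{[S]})\approx1$ is in general strictly larger than $I(P^{[S]})$. The work therefore lies in isolating the enhanced process $Z_n=Z((P_n)^{[S]})$, verifying that it still obeys the one‑sided Ar{\i}kan $Z$‑recursions so that \cite{ari2} applies to it directly, and separately establishing that it genuinely polarizes (the submartingale‑plus‑Lemma~\ref{bjg} step), so that the fast bound captures the full limit $q$ and not merely $I(P^{[S]})$.
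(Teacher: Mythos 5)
Your proposal is correct and follows essentially the same route as the paper's own proof: the commutation/degradation relations $(P^{[S]})^-\equiv(P^-)^{[S]}$ and $(P^{[S]})^+\preceq(P^+)^{[S]}$, the resulting one-sided recursions $Z_{\ell+1}\le Z_\ell^2$ (on $+$) and $Z_{\ell+1}\le 2Z_\ell$ (on $-$) for $Z_\ell=Z\bigl((P_\ell)^{[S]}\bigr)$, Theorem 3 of \cite{ari2}, and the elementary $I$--$Z$ inequalities to translate back. You are somewhat more explicit than the paper on two points it leaves implicit --- that $I\bigl((P_\ell)^{[S]}\bigr)$ itself polarizes (via the submartingale property and Lemma \ref{bjg}) and the final set-difference counting --- but these are elaborations of the same argument, not a different one.
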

The proof of this lemma is given below.  Let
\begin{align}
& D_n[S] : = \{ i \in \{1,\dots,n\}: I((P_{(i)})^{[S]}) > 1-\e_n\},\\
& \widetilde{D}_n[S] : = \{ i \in \{1,\dots,n\}: I((P_{(i)})^{[S]}) > 1-\e \}.
\end{align}
From Lemma \ref{speeds}, we have that
\begin{align}
&\max_{S \in 2^{\M}} \frac{1}{n} | \widetilde{D}_n[S] \setminus D_n[S] | \rightarrow 0.
\end{align}
This implies that
\begin{align}
&\frac{1}{n} | \widetilde{\mathcal{D}}_n \setminus \mathcal{D}_n | \rightarrow 0 \label{gap}
\end{align}
where
\begin{align*}
\widetilde{\mathcal{D}}_n := \{ i \in &\{1,\dots,n\} :  \text{$I(P_{(i)})[J] \in \mZ \pm \e$ for any $J$},\\
 & \exists \text{$A_i\in \F_2^{r_i\times m}$ with $r_i=\textrm{int}(I(P_{(i)}[\M]))$ and}\\
 & I(A_iU_i[\M];Y^n U[\M]^{i-1})> r_i-\gamma(\e)\}
\end{align*}
where $\gamma(\e)$ is as in Theorem \ref{with2}.  (The only difference between $\mathcal D$ and $\widetilde{\mathcal D}$ is in the $\gamma(\e)$ and $\e_n$ in the last line.)

Since from Theorem \ref{with2} 
\begin{align*}
\lim_{l \rightarrow \infty} \frac{1}{n} |\widetilde{\mathcal{D}}_n| = 1,
\end{align*}
we also have from \eqref{gap}
\begin{align*}
\lim_{l \rightarrow \infty} \frac{1}{n} |\mathcal{D}_n| = 1.
\end{align*}
Finally, since the polarization process preserves the sum-rate, we conclude the proof of the Theorem.
\end{proof}

\begin{proof}[Proof of Lemma \ref{speeds}]
Note that 
\begin{align*}
&(P^{[S]})^- \equiv  (P^{-})^{[S]}  \\
& (P^{[S]})^+ \preceq  (P^{+})^{[S]} 
\end{align*}
where $\equiv$ means that the two transition probability distributions are the same and where $\preceq$ means that they are degraded in the sense 
$$P_1(y | x) \preceq P_2(y|x) \text{ if } P_1(y|x)=P_2(\phi (y)|x) $$
for some function $\phi$. 
Hence, 
defining the {\it Bhattacharyya parameter} of a single-user channel $Q$ with binary input and output alphabet $\Y$ by 
$$Z(Q) = \sum_{y\in \Y} \sqrt{Q(y|0) Q(y|1)},$$ 
we have
\begin{align*}
&  Z[ (P^{-})^{[S]} ]= Z [ (P^{[S]})^- ]  \leq 2 Z[ P^{[S]} ]\\
&   Z[  (P^{+})^{[S]} ] \leq Z[ (P^{[S]})^+ ]=Z[  P^{[S]} ]^2   
\end{align*}
and the random process $Z_\ell=Z [ (P_\ell)^{[S]} ]$ satisfies 
\begin{align}
&Z_{\ell+1} \leq Z_{\ell}^2 \text{ if } B_{\ell+1}=1, \label{z1} \\
&Z_{\ell+1} \leq 2 Z_\ell  \text{ if } B_{\ell+1}=0. \label{z2}
\end{align}
We then conclude by using Theorem 3 of \cite{ari2}, which shows that a random process which satisfies\footnote{the conditions required in Theorem 3 of \cite{ari2} are indeed weaker than what we have here} \eqref{z1} and \eqref{z2} satisfies for any $\beta < 1/2$
$$\lim \inf_{\ell \rightarrow \infty} \pp(Z_\ell \leq 2^{-2^{\beta \ell}}) \geq \pp(Z_\infty=0).$$
Hence, we have proved that
\begin{align*}
&\lim_{l \rightarrow \infty} \frac{1}{2^l} |\{ i \in \{1,\ldots, 2^l\} : I((P_{(i)})^{[S]}) > 1-\e ,\notag \\ 
&\qquad Z((P_{(i)})^{[S]}) \geq 2^{-2^{l \beta}} \}| = 0. 
\end{align*}
To conclude the proof of the lemma, we use the fact that for any binary input discrete memoryless channel $Q$, 
we have $I(Q) +Z(Q)>1$, hence $I(Q)< 1-\delta$ implies $Z(Q)> \delta$.
\end{proof}

\section{Coding for the AWGN channel}
We can use the results of Section \ref{msol} to construct capacity-achieving codes for the AWGN channel in the following way. Over an AWGN channel, by transmitting the standardized average of i.i.d. binary random variables, scaled to satisfy the power constraint, the receiver observes
$$Y = \frac{2\sqrt{p}}{\sqrt{m}} \sum_{i=1}^m (X_i-1/2) + Z,$$
where $Z$ is Gaussian distributed. We can view this channel as being a $m$-user BMAC, $(X_1,\ldots,X_m) \rightarrow Y$, and the polar code constructed in this paper can be used to communicate over this channel. From the central limit theorem, by taking $m$ arbitrarily large, the input distribution of previous scheme is arbitrarily close to a Gaussian distribution, and hence, this coding scheme can achieve rates arbitrarily close to the AWGN capacity. To ensure that this scheme provides a `low encoding and decoding complexity code' for the AWGN channel, one has to make further complexity considerations when assuming $m$ arbitrarily large. First, the decoder must recover a $m$-dimensional binary vector over each extremal MAC and the total (maximal) number of hypothesis is $2^m$. For this, the decoder can proceed with each of the $m$ users individually (reducing the problem to $m$ successive hypothesis tests), by using the marginalized single-user channel between one user and the output, which is an extremal channel in the single-user sense. Also, one maximal independent set of users needs to be identified for each extremal MAC, to know where the information bits should be sent. There is no need to check exponentially many sets for this purpose, since this is achieved in at most $m$ steps, by using a greedy algorithm that checks the independence of a given set and increases the set by one element at each step (starting with the empty set).

\section{Discussion}
We have constructed a polar code for the MAC with arbitrarily many users, 
which preserves the properties (complexity, error probability decay) of the polar code constructions in \cite{ari, 2mac}. 
The polarization technique brings an interesting perspective on the MAC problem: by polarizing the MACs for each user separately, we create a collection of extremal MACs which are ``trivial'' to communicate over, both regarding how to handle noise (noiseless or pure noise) but also regarding how to handle interference (which is, modulo synchronization in the code, removed). We have also shown that the extremal MACs are in a one-to-one correspondence with the linear deterministic MACs, i.e., MACs whose outputs are linear forms of the inputs.  
The polar code constructed in this paper is shown to achieve only a portion of the dominant face of the MAC region, which is however sufficient to achieve the uniform sum rate on any binary input MAC. 
There are examples of non-extremal MACs where the polar code described in this paper can achieve rates in the entire uniform rate region, for example, this is the case for a 2-user MAC whose output is $X_1+X_2$ with probability half and $(X_1,X_2)$ with probability half.   
In general, this may not be the case. 
Finally, we have considered in this paper MACs with arbitrary many users but binary input alphabets for each user.
However, for a MAC with $m$ users and $q$-ary input alphabets, where $q=2^k$, we can split each user into $k$ virtual users with binary inputs and use the polar code construction of this paper to achieve the uniform sum rate.   Furthermore, if an $m$-user $q$-ary input MAC requires a certain distribution to achieve the (true) sum rate, then, we can split each user into multiple virtual users with binary inputs, map the input vector of these to the channel inputs so that the uniform binary distribution on the virtual users induces an approximation of the required distribution (which will get better with increasing number of virtual users), and thus achieve the sum capacity of an arbitrary MAC.

 


\appendix\label{app}
In this section, we prove Theorem \ref{with} and Theorem \ref{with2}. We first need an auxiliary lemma.
\begin{lemma}\label{m2}
Let $W$ be a binary MAC with 2 users. Let $X[E_2]$ with i.i.d. uniform binary components and let $Y$ be the output of $W$ when $X[E]$ is sent.
If $I(X[1];YX[2])$, $I(X[2];YX[1])$ and $I(X[1]X[2];Y)$ have specified integer values, then $I(X[1];Y), I(X[2];Y)$ and $I(X[1]+X[2];Y)$ have specified values in $\{0,1\}$.
\end{lemma}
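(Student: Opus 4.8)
The plan is to peel off the two ``single--user'' mutual informations first, using only the chain rule, and then to pin down $I(X[1]+X[2];Y)$ by a short case split. First I would use that $X[1]$ and $X[2]$ are independent, so the chain rule gives $I(X[1]X[2];Y)=I(X[1];Y)+I(X[2];Y\mid X[1])=I(X[1];Y)+I(X[2];YX[1])$; hence $I(X[1];Y)=I(X[1]X[2];Y)-I(X[2];YX[1])$ is the difference of two specified integers, so it is itself a specified integer, and since $0\le I(X[1];Y)\le H(X[1])=1$ it lies in $\{0,1\}$. Symmetrically, $I(X[2];Y)=I(X[1]X[2];Y)-I(X[1];YX[2])\in\{0,1\}$ is specified. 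Write $a=I(X[1];Y)$, $b=I(X[2];Y)$ and $V=X[1]+X[2]$, noting $H(V)=1$.

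Next I would dispose of the cases $a=1$ or $b=1$. If $a=1$ then $H(X[1]\mid Y)=0$, so on the support of $Y$ the value of $X[1]$ is a deterministic function $g(y)$ of $y$; conditioned on $\{Y=y\}$ the map $x_2\mapsto g(y)+x_2$ is a bijection of $\F_2$, hence $H(V\mid Y)=H(X[2]\mid Y)=1-b$ and therefore $I(V;Y)=b\in\{0,1\}$, a specified value. The case $b=1$ is symmetric and gives $I(V;Y)=a$ (and when $a=b=1$ both computations agree).

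The remaining case is $a=b=0$, i.e.\ $X[1]\perp Y$ and $X[2]\perp Y$. Then for each $y$ in the support of $Y$ the conditional law $\pi_y$ of $(X[1],X[2])$ given $\{Y=y\}$ has both marginals uniform, so it is the one--parameter family $\pi_y(0,0)=\pi_y(1,1)=t_y$, $\pi_y(0,1)=\pi_y(1,0)=\tfrac12-t_y$ with $t_y\in[0,\tfrac12]$. A direct computation then gives $H(X[1]X[2]\mid Y=y)=1+h_2(2t_y)$ and $H(V\mid Y=y)=h_2(2t_y)$, where $h_2$ is the binary entropy function. Averaging over $Y$,
\[
I(X[1]X[2];Y)=2-\E[H(X[1]X[2]\mid Y)]=1-\E[h_2(2t_Y)]=1-\E[H(V\mid Y)]=I(V;Y),
\]
so $I(V;Y)$ equals the specified integer $I(X[1]X[2];Y)$ and, being in $[0,1]$, lies in $\{0,1\}$. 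This exhausts all cases.

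The main obstacle is precisely this last case: the chain rule alone does not determine $I(X[1]+X[2];Y)$ from the given data, so one must exploit the extra structure that a distribution on $\F_2^2$ with uniform marginals is a single--parameter family, which forces the identity $H(X[1]X[2]\mid Y=y)=1+h_2(2t_y)$ and thereby ties the two conditional entropies together. (An alternative would be to enumerate the five matroids $(0,0,0),(1,0,1),(0,1,1),(1,1,1),(1,1,2)$ on a two--element ground set and handle $(0,0,0)$ and $(1,1,1)$ separately, but the computation above treats them uniformly.)
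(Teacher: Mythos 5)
Your proof is correct. The reduction $I(X[1];Y)=I(X[1]X[2];Y)-I(X[2];YX[1])$ and its symmetric twin are exactly what the paper uses implicitly, but your handling of the remaining quantity $I(X[1]+X[2];Y)$ takes a different route. The paper first invokes the polymatroid axioms to enumerate the five admissible triples $[0,0,0],[0,1,1],[1,0,1],[1,1,1],[1,1,2]$ and then, for each, argues atom by atom on the conditional law $\pp(\cdot\mid y)$: integrality of the conditional mutual informations forces certain entries $\pp(x_1x_2\mid y)$ to vanish or to equal $1/2$, from which one reads off which linear forms of the inputs are recoverable. You instead split only on $(a,b)=(I(X[1];Y),I(X[2];Y))$: when $a=1$ or $b=1$ the bijection $x_2\mapsto g(y)+x_2$ settles things immediately, and in the genuinely hard case $a=b=0$ you exploit that a distribution on $\F_2^2$ with uniform marginals is the one-parameter family $t_y$, yielding the exact identity $H(X[1]X[2]\mid Y=y)=1+H(X[1]+X[2]\mid Y=y)$ and hence $I(X[1]+X[2];Y)=I(X[1]X[2];Y)$. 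This treats the paper's cases $[0,0,0]$ and $[1,1,1]$ uniformly, dispenses with the explicit matroid enumeration, and replaces the support-analysis of the conditional pmf with a single entropy computation; the paper's version, on the other hand, makes the structural picture (which cosets carry the conditional mass, hence which linear form is decodable) more explicit, which is the viewpoint it later leans on for Theorem \ref{with} and for the approximate variant in Lemma \ref{m2e}. The only cosmetic caveat is that for a general output alphabet $\Y$ the statements ``$\pi_y$ has uniform marginals'' and ``$X[1]=g(y)$'' hold for almost every $y$ rather than every $y$ in the support, a point the paper's own proof also glosses over.
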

\begin{proof}
Let 
\begin{align*}
&I:=[I(X[1];YX[2]), I(X[2];YX[1]), I(X[1]X[2];Y)]\\
&J:=[I(X[1];Y), I(X[2];Y), I(X[1]+X[2];Y)].
\end{align*}
Note that by the polymatroid property of the mutual information, we have
\begin{align}
I \in \{[0,0,0],  [0,1,1], [1,0,1],  [1,1,1],  [1,1,2]\}. \label{poss}
\end{align}
Let $y \in \supp (Y)$ and for any $x\in \F_2^2$ define $\pp(x|y) = W(y|x)  / \sum_{z \in \F_2^2}W(y|z)$ (recall that $W$ is the MAC with inputs $X[1],X[2]$ and output $Y$). 
Assume w.l.o.g. that $p_0:=\pp(0,0|y)>0$. 

\begin{itemize}
\item If $I=[0,0,0]$ we clearly must have $J =[0,0,0]$.

\item If $I=[\star,1,1]$, we have $I(X[2];YX[1])=1$ and we can determine $X[2]$ by observing $X[1]$ and $Y$, which implies 
$$\pp (01|y)=0.$$
Moreover, since $I(X[1];Y)= I(X[1]X[2];Y) - I(X[2];Y X[1])=0$, i.e., $X[1]$ is independent of $Y$, we must have that $\sum_{x[2]}\pp(x[1]x[2]|y)$ is uniform, and hence, 
\begin{align*}
& \pp (00| y ) = 1/2,
& \pp (10| y ) + \pp (11| y )=1/2. 
\end{align*}
Now, if $\star = 1$, by a symmetric argument as before, we must have $\pp (11| y )=1/2$ and hence the input pairs $00$ and $11$ have each probability half (a similar situation occurs when assuming that $\pp(x|y)>0$ for $x\neq (0,0)$), and we can only recover $X[1]+X[2]$ from $Y$, i.e., $J=[0,0,1]$.
If instead $\star=0$, we then have $I(X[2];Y)= I(X[1]X[2];Y) - I(X[1];Y X[2])=1$ and from a realization of $Y$ we can determine $X[2]$, i.e., $\pp(10)=1/2$ and $J=[0,1,0]$.

\item If $I=[1,0,1]$, by symmetry with the previous case, we have $J=[1,0,0]$.

\item If $I=[1,1,2]$, we can recover all inputs from $Y$, hence $J =[1,1,1]$.
\end{itemize}
\end{proof}

\begin{proof}[Proof of Theorem \ref{with}] 
Let $I[S](W)$ be assigned an integer for any $S \subseteq \M$.
By the chain rule of the mutual information
$$I(X[\M]; Y) = I(X[S]; Y) + I(X[S^c] ; Y X[S]),$$
and we can determine $I(X[S]; Y)$ for any $S$. Since for any $T \subseteq S$
$$I(X[S]; Y) = I(X[T]; Y) + I(X[S-T] ; Y X[T]),$$ 
we can also determine $I(X[S]; Y X[T])$ for any $S,T \subseteq \M$ with $S \cap T = \emptyset$.
Hence we can determine 
\begin{align*}
&I(X[1] ,X[2]; Y X[S]) \\
& I(X[1]; Y X[S] X[2]) \\
& I(X[2] ; Y X[S] X[1] ) 
\end{align*}
and using Lemma \ref{m2}, we can determine
\begin{align*}
&I(X[1] + X[2] ; Y X[S]) 
\end{align*}
for any $S \subseteq \M$ with $\{1,2\} \notin S$, hence 
\begin{align*}
&I(X[i] + X[j] ; Y) 
\end{align*}
for any $i,j \in \M$.

Assume now that we have determined $I(\sum_T X[i] ; Y X[S])$ for any $T$ with $|T| \leq k$ and $S \subseteq \M -T$. 
Let $T=\{1,\dots, k\}$ and let $S \subseteq \{k+2,\dots,m\}$.
\begin{align*}
&I(\sum_T X[i], X[k+1]; Y X[S] ) \\&= I( X[k+1]; Y X[S]) + I(\sum_T X[i] ; YX[S] X[k+1]),
\end{align*}
in particular, we can determine
\begin{align*}
& I(X[k+1] ; Y \sum_T X[i], X[S] ) \\
&= I(\sum_T X[i] , X[k+1]; Y X[S] )\\& - I(\sum_T X[i]; YX[S] )
\end{align*}
and
\begin{align*}
&I(\sum_T X[i] ,X[k+1]; Y X[S]) \\
& I(\sum_T X[i] ; Y X[S] X[k+1]) \\
& I(X[k+1] ; Y \sum_T X[i] ,X[S] ) 
\end{align*}
and using Lemma \ref{m2}, we can determine
\begin{align*}
&I(\sum_T X[i] + X[k+1]; Y X[S])
\end{align*}
hence
\begin{align*}
&I(\sum_{T}X[i]; Y)
\end{align*}
for any $T \subseteq \M$ with $|T|= k+1$.
Hence, inducting this argument, we can determine $I(\sum_{T}X[i]; Y)$ for any $T \subseteq \M$.

Note that the values of these mutual informations must be consistent, for example, if $I(X[1]+X[2];Y)=1$ and $I(X[1]+X[3];Y)=1$, we must have $I(X[2]+X[3];Y)=1$. Hence, the $T$'s for which $I(\sum_{T}X[i]; Y)$ is assigned 1 must be in agreement with these linear relationships, which can be compactly expressed as $I(A X[\M]; Y)= \rank (A)$ for some binary matrix $A$. Finally, one can check directly (or by using Theorem \ref{mat}) that $\rank(A) = I(X[\M];Y)$.
\end{proof}

In order to prove the ``approximative'' version of Theorem \ref{with}, i.e., Theorem \ref{with2}, we need the following lemma which is a corollary of Lemma 33 in \cite{2mac}. The proof of Theorem \ref{with2} follows then from Lemma \ref{m2e} and the proof of Theorem \ref{with}.
\begin{lemma}\label{m2e}
Let $W$ be a binary MAC with 2 users. Let $X[E_2]$ with i.i.d. uniform binary components and let $Y$ be the output of $W$ when $X[E]$ is sent.
If $I(X[1];YX[2])$, $I(X[2];YX[1])$ and $I(X[1]X[2];Y)$ have specified integer values within $\e$, then $I(X[1];Y), I(X[2];Y)$ and $I(X[1]+X[2];Y)$ have specified values outside $(\gamma(\e),1-\gamma(\e))$ with $\gamma(\e) \stackrel{\e \to 0}\to 0$.
\end{lemma}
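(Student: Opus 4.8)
The plan is to robustify, step by step, the proof of Lemma \ref{m2}, with Lemma 33 of \cite{2mac} (of which the present lemma is a corollary) supplying the quantitative content wherever that proof makes an exact deduction. Write $I=[I(X[1];YX[2]),\,I(X[2];YX[1]),\,I(X[1]X[2];Y)]$ and $J=[I(X[1];Y),\,I(X[2];Y),\,I(X[1]+X[2];Y)]$ as in the proof of Lemma \ref{m2}, and let $(a,b,c)$ be the integer triple that $I$ is assumed to be $\e$-close to. Since the entries of $I$ are genuine mutual informations they satisfy the polymatroid inequalities exactly, so $(a,b,c)$ satisfies them to within $2\e$; for $\e$ small enough this forces $(a,b,c)$ into the five feasible patterns $\{[0,0,0],[0,1,1],[1,0,1],[1,1,1],[1,1,2]\}$ of \eqref{poss} (for larger $\e$ the conclusion is vacuous and we simply set $\gamma(\e)=1/2$). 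It therefore suffices to produce, for each of these five reference triples, a function $\gamma(\e)\to0$ such that $\e$-closeness of $I$ forces every entry of $J$ to lie within $\gamma(\e)$ of the corresponding reference value of $J$ recorded in the proof of Lemma \ref{m2}.

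The deductions come in two flavours. The first is pure mutual-information arithmetic and needs no robustness input: since $I(X[1];Y)=I(X[1]X[2];Y)-I(X[2];YX[1])$ and $I(X[2];Y)=I(X[1]X[2];Y)-I(X[1];YX[2])$, the hypothesis gives $|I(X[1];Y)-(c-b)|\le2\e$ and $|I(X[2];Y)-(c-a)|\le2\e$; and in the $[0,0,0]$ case monotonicity alone gives $I(X[1];Y),\,I(X[2];Y),\,I(X[1]+X[2];Y)\le I(X[1]X[2];Y)\le\e$. This already pins two entries of $J$, to within $O(\e)$, in every case. The second flavour is the ``shape of the posterior'' argument: in the exact proof one rewrites the hypotheses as $H(X[1]\mid X[2]Y),\,H(X[2]\mid X[1]Y)\in\{0,1\}$ and $H(X[1]X[2]\mid Y)\in\{0,1,2\}$ and deduces that the posterior $\pp(\cdot\mid y)$ has one of a few ideal forms --- the diagonal $\{00,11\}$, the antidiagonal $\{01,10\}$, a point mass, the uniform distribution --- from which the last entry of $J$ is read off. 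Approximately, these conditional entropies are within $\e$ of $0$ or of $1$; a value near $0$ translates (using $\min(p,1-p)\le H_b(p)$) into a small average error probability for recovering one input from $Y$ and the other input, and a value near $1$ into near-uniformity of that input given the conditioning variables. Feeding these into Lemma 33 of \cite{2mac} yields the quantitative conclusion that $\pp(\cdot\mid y)$ is, for $y$ in a set carrying a $1-\gamma(\e)$ fraction of the mass of $Y$, within $\gamma(\e)$ in total variation of the appropriate ideal form; continuity of the map from posteriors to $I(X[1]+X[2];Y)$ (which is $1-\E_y[H_b(\pp(01|y)+\pp(10|y))]$) then gives $\gamma(\e)$-closeness of the last entry of $J$. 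Taking the worst $\gamma$ over the finitely many cases completes the argument.

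The step I expect to be the genuine obstacle is exactly the one Lemma 33 of \cite{2mac} is designed to remove: passing from ``$I$ is $\e$-close to an integer triple'' to ``$\pp(\cdot\mid y)$ is $\gamma(\e)$-close to an ideal form for a $1-\gamma(\e)$ fraction of the mass of $Y$'', \emph{uniformly} over the arbitrary (possibly infinite) output alphabet $\Y$. One cannot obtain this from compactness of the space of channels $W:\F_2^2\to\Y$; the continuity bound must depend on $\e$ alone, which is precisely what that lemma delivers. Granting it, the remainder is bookkeeping: match each of the five reference configurations to the right instance of the lemma, combine it with the elementary arithmetic bounds above, and take the maximum of the resulting $\gamma$'s.
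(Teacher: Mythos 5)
Your proposal is correct and follows essentially the same route as the paper, which offers no proof beyond declaring Lemma \ref{m2e} a corollary of Lemma 33 in \cite{2mac}; your plan of robustifying the case analysis of Lemma \ref{m2} — handling two of the three entries of $J$ by exact chain-rule arithmetic and delegating the posterior-shape step for $I(X[1]+X[2];Y)$ to that external lemma — is precisely the intended argument, spelled out in more detail than the paper provides. You also correctly isolate the one step that genuinely requires the cited lemma, namely uniformity of the continuity bound over arbitrary output alphabets.
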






\end{document}